\newtheorem{algorithms}{Algorithm}
\newtheorem{corollary}{Corollary}
\newtheorem{proposition}{Proposition}
\newtheorem{remark}{Remark}
\newtheorem{theorem}{Theorem}
\def\det{\mbox{\scriptsize det}}
\def\aa{\hspace{5mm}}
\def\CH{\mathsf{CH}}
\def\ch{\mathsf{ch}}
\def\CL{\mathsf{CL}}
\def\cl{\mathsf{cl}}
\newenvironment{namelist}[1]{%
\begin{list}{}
  {
   
   \settowidth{\labelwidth}{#1}
   \setlength{\leftmargin}{1.1\labelwidth}
   }
  }{%
\end{list}}
\newcommand{\step}[1]{\mbox{\rm\bf Step {#1}:}}
\begin{document}
\title{Optimal Construction of Regenerating Code through Rate-matching in Hostile Networks}
\author{Jian Li\aa Tongtong Li\aa Jian Ren\thanks{The authors are with the Department of ECE, Michigan State University, East Lansing, MI 48824-1226. 
 Email: \{lijian6, tongli, renjian\}@msu.edu}}

\date{November 4, 2015}
\maketitle

\begin{abstract}
Regenerating code is a class of code very suitable for distributed storage systems, which can maintain optimal bandwidth and storage space. Two types of important regenerating code have been constructed: the minimum storage regeneration (MSR) code and the minimum bandwidth regeneration (MBR) code. However, in hostile networks where adversaries can compromise storage nodes, the storage capacity of the network can be significantly affected. In this paper, we propose two optimal constructions of regenerating codes through rate-matching that can combat against this kind of adversaries in hostile networks: 2-layer rate-matched regenerating code and $m$-layer rate-matched regenerating code. For the 2-layer code, we can achieve the optimal storage efficiency for given system requirements. Our comprehensive analysis shows that our code can detect and correct malicious nodes with higher storage efficiency compared to the universally resilient regenerating code which is a straightforward extension of regenerating code with error detection and correction capability. Then we propose the $m$-layer code by extending the 2-layer code and achieve the optimal error correction efficiency by matching the code rate of each layer's regenerating code. We also demonstrate that the optimized parameter can achieve the maximum storage capacity under the same constraint. Compared to the universally resilient regenerating code, our code can achieve much higher error correction efficiency.
\end{abstract}

\begin{IEEEkeywords}
Optimal regenerating code, MDS code, error-correction, adversary.
\end{IEEEkeywords}

\section{Introduction}
Distributed storage is a popular method to store files securely without requiring data encryption. Instead of storing a file and its replications in multiple servers, we can break the file into components and store the components into multiple servers. 
In this way, both the reliability and the security of the file can be increased. A typical approach is to encode the file using an $(n,k)$ Reed-Solomon (RS) code and distribute the encoded file into $n$ servers. When we need to recover the file, we only need to collect the encoded parts from $k$ servers, which achieves a trade-off between reliability and efficiency. However, when repairing or regenerating the contents of a failed node, the whole file has to be recovered first, which is a waste of bandwidth. 

The concept of regenerating code was introduced in~\cite{Dimakis}, %The bandwidth needed for repairing a failed node could be much less than the size of the whole file. In fact the regenerating code achieves an optimal trade-off between bandwidth and storage within the minimum storage regeneration (MSR) and the minimum bandwidth regeneration (MBR) points.
where a replacement node is allowed to connect to some individual nodes directly and regenerate a substitute of the failed node, instead of first recovering the original data then regenerating the failed component. Compared to the RS code, regenerating code achieves an optimal tradeoff between bandwidth and storage within the minimum storage regeneration (MSR) and the minimum bandwidth regeneration (MBR) points.  

However, when malicious behaviors exist in the network, both the regeneration of the failed node or the reconstruction of the original file will fail. The error resilience of the Reed-Solomen code based regenerating code in the network with errors and erasures was analyzed in~\cite{Rashmi-err}. In our previous work, a Hermitian code based regenerating code was proposed to provide better error correction capability compared to the Reed-Solomen code based approach.

%The construction of the regenerating code has been well studied. In~\cite{Ywu,Duminuco,Shum}, the authors studied the functional regenerating code, where the replacement node regenerates a new component that can functionally replace a failed component instead of being the same as the original stored component. The exact regeneration was studied in~\cite{Shah}, where the replacement node regenerates the exact symbols of a failed node. \cite{Rashmi} presented optimal exact constructions of MBR code and MSR code under product-matrix framework.
%Ywu2, ,Rashmi

%In recent year, the security of the regenerating code is being studied. In~\cite{Han}, the authors proposed to add CRC codes in the regenerating code to check the integrity of the data in hostile networks. Unfortunately, the CRC checks can be easily manipulated by the malicious nodes. In~\cite{Rashmi-err}, the authors analyzed the error resilience of the regenerating code in the network with errors and erasures. They provided the theoretical error correction capability. In \cite{jian14} the authors proposed a Hermitian code based regenerating code, which could provide better error correction capability. In~\cite{Rashimi-universal} the authors proposed the universally secure regenerating code to achieve information theoretic data confidentiality. But the extra computational cost and bandwidth have to be considered for this code. In \cite{jian14glo} the authors proposed to apply linear feedback shift register (LFSR) to protect the data confidentiality.

Inspired by the nice performance of Hermitian code based regenerating codes, in this paper we step forward to further construct optimal regenerating codes which have similar layered structure like Hermitian code in distributed storage. The main contributions of this paper are:
\begin{itemize}
\item We propose an optimal construction of 2-layer rate-matched regenerating code. Both theoretical analysis and performance evaluation show that this code can achieve storage efficiency higher than the universally resilient regenerating code proposed in~\cite{Rashmi-err}.

\item We propose an optimal construction of $m$-layer rate-matched regenerating code. The $m$-layer code can achieve higher error correction efficiency than the code proposed in~\cite{Rashmi-err} and the Hermitian code based regenerating code proposed in~\cite{jian14}. Furthermore, the $m$-layered code is  easier to understand and has more flexibility than the Hermitian based code.
\end{itemize}
Here we will focus on error correction and malicious node locating in data regeneration and reconstruction in distributed storage. When no error occurs or no malicious node exists, the data regeneration and reconstruction can be processed the same as the existing works. 

It it worth to note that although there are two types of regenerating codes: MSR code and MBR code on the MSR point and MBR point respectively, in this paper we will only focus on the optimization of the MSR code for the following two reasons:
\begin{enumerate}
\item The processes and results of the optimization for these two codes are similar. The optimization for the MSR code can be directly applied to the MBR code with similar optimization results.

\item The differences between the constructions of MSR code and MBR code have little impact on the optimization proposed in this paper.
\end{enumerate}

The rest of this paper is organized as follows: in Section~\ref{Sec:related} we introduce the related work. In Section~\ref{Sec:Preliminary}, the preliminary of this paper is presented.  In Section~\ref{Sec:rate-matched-MSR}, we propose two component codes for the rate-matched regenerating codes. We propose and analyze the 2-layer rate-matched regenerating code in Section~\ref{sec:2-layer}. Then we propose and analyze the $m$-layer rate-matched regenerating code in Section~\ref{sec:m_layer_msr}. The paper is concluded in Section~\ref{Sec:Conclusion}.

\section{Related Work}\label{Sec:related}

When a storage node in the distributed storage network that employing the conventional $(n,k)$ RS code (such as OceanStore~\cite{Ocean} and Total Recall~\cite{Total}) fails,
the replacement node connects to $k$ nodes and downloads the whole file to recover the symbols stored in the failed node. This approach is a waste of bandwidth because the whole file has to be downloaded to recover a fraction of it.
To overcome this drawback, Dimakis \emph{et al}.~\cite{Dimakis} introduced the conception of $\{n, k, d, \alpha, \beta, B\}$ regenerating code based on the network coding. In the context of regenerating code, the contents stored in a failed node can be regenerated by the replacement node through downloading $\gamma$ help symbols from $d$ helper nodes. The bandwidth consumption for the failed node regeneration could be far less than the whole file. A data collector~(DC)
can reconstruct the original file stored in the network by downloading $\alpha$ symbols from each of the $k$ storage nodes.
In~\cite{Dimakis}, the authors proved that there is a tradeoff between bandwidth $\gamma$ and per node storage $\alpha$.  They found two optimal points: minimum storage regeneration (MSR) and minimum bandwidth regeneration (MBR) points. Currently there are many literatures focusing on the optimal regenerating codes design: \cite{Daniel09searchingformsr,Shah10explicitmbr,Changho10exactmdsia,Yunnan11constructionsysmdsmbr,Papailiopoulos12simrc,El10fracrepetitioninds,Tamo11mdsarraycodes,Viveck11optimalrepairviaia,Papailiopoulos13repairhadamard,Shah10flexiblercfords,Shum11existmbrcorc,Anyu13exactcorcmbr}. In~\cite{Hou13basicrcbinary,Yuliang09rcbasedp2p} the implementation of the regenerating code were studied.

The regenerating code can be divided into functional regeneration and exact regeneration. In the functional regeneration, the replacement node
regenerates a new component that can functionally replace the failed component instead of being the same as the original stored component. \cite{Ywu} formulated the data regeneration as a multicast network coding problem and constructed functional regenerating codes. \cite{Duminuco} implemented a random linear regenerating codes for distributed storage systems. \cite{Shum} proved that by allowing data exchange among the replacement nodes, a better tradeoff between repair bandwidth $\gamma$ and per node storage $\alpha$ can be achieved. In the exact regeneration, the replacement node regenerates the exact symbols of a failed node. \cite{Ywu2} proposed to reduce the regeneration bandwidth through algebraic alignment. \cite{Shah} provided a code structure for exact regeneration using interference alignment technique. \cite{Rashmi} presented optimal exact constructions of MBR codes and MSR codes under product-matrix framework. This is the first work that allows independent selection of the nodes number $n$ in the network.

None of these works above considered code regeneration under node corruption or adversarial manipulation attacks in hostile networks. In fact, all these schemes will fail in both regeneration and reconstruction if there are nodes in the storage cloud sending out incorrect responses to the regeneration and reconstruction requests.

In~\cite{Oggier11byzanfaulttolofrc}, the Byzantine fault tolerance of regenerating codes were studied.
In~\cite{Pawar}, the authors discussed the amount of information that can be safely stored against passive eavesdropping and active adversarial attacks based on the regeneration structure.
In~\cite{Han}, the authors proposed to add CRC codes in the regenerating code to check the integrity of the data in hostile networks. Unfortunately, the CRC checks can also be manipulated by the malicious nodes, resulting in the failure of the regeneration and reconstruction.
In~\cite{Chen12dataintegrityprotection}, the authors proposed to add data integrity protection in distributed storage.
In~\cite{Cachin06optimalresforerasurecoded}, the authors proposed an erasure-coded distributed storage based on threshold cryptography.
In~\cite{Abd05lazyver}, the authors analyzed the verification cost for both the client read and write operation in workloads with idle periods.
In~\cite{Rashmi-err}, the authors analyzed the error resilience of the RS code based regenerating code in the network with errors and erasures. They provided the theoretical error correction capability.  
 In \cite{jian14} the authors proposed a Hermitian code based regenerating code, which could provide better error correction capability. In~\cite{Rashimi-universal} the authors proposed the universally secure regenerating code to achieve information theoretic data confidentiality. But the extra computational cost and bandwidth have to be considered for this code. In \cite{jian14glo} the authors proposed to apply linear feedback shift register (LFSR) to protect the data confidentiality.

\section{Preliminary and Assumptions} \label{Sec:Preliminary}
\label{Sec:Preliminary}

\subsection{Regenerating Code}
Regenerating code introduced in~\cite{Dimakis} is a linear code over finite filed $\mathbb{F}_q$ with a set of parameters $\{n, k, d, \alpha, \beta, B\}$. A file of size $B$ is stored in $n$ storage nodes, each of which stores $\alpha$ symbols. A replacement node can regenerate the contents of a failed node by downloading $\beta$ symbols from each of $d$ randomly selected storage nodes. So the total bandwidth needed to regenerate a failed node is $\gamma = d\beta$. The data collector (DC) can reconstruct the whole file by downloading $\alpha$ symbols from each of $k\leq d$ randomly selected storage nodes. In~\cite{Dimakis}, the following theoretical bound was derived:
\begin{equation}
\label{eq:min_cut}
B \leq \sum_{i=0}^{k-1}\min \{ \alpha, (d-i)\beta \}.
\end{equation}
From equation~(\ref{eq:min_cut}), a trade-off between the regeneration bandwidth $\gamma$ and the storage requirement $\alpha$ was derived. $\gamma$ and $\alpha$ cannot be decreased at the same time. There are two special cases: minimum storage regeneration (MSR) point in which the storage parameter $\alpha$ is minimized;
\begin{equation}
\label{eq:MSR_tradeoff}
(\alpha_{MSR},\gamma_{MSR})= \left(\frac Bk, \frac{Bd}{k(d-k+1)}\right),
\end{equation}
and minimum bandwidth regeneration (MBR) point in which the bandwidth $\gamma$ is minimized:
\begin{equation}
\label{eq:MBR_tradeoff}
(\alpha_{MBR},\gamma_{MBR})= \left(\frac{2Bd}{2kd-k^2 + k},\frac{2Bd}{2kd-k^2 + k} \right).
\end{equation}

%\subsection{Reed-Solomon Code}
%
%\subsection{Pseudo Random Number Generator}
\subsection{System Assumptions and Adversarial Model}
In this paper, we assume there is a secure server that is responsible for encoding and distributing the data to storage nodes. Replacement nodes will also be initialized by the secure server. DC and the secure server can be implemented in the same computer and can never be compromised. We use the notation $\CH$/$\CL$ to refer to either the full rate/fractional rate MSR code or a codeword of the full rate/fractional rate MSR code. The exact meaning can be discriminated clearly according to the context.

We assume some network nodes may be corrupted due to hardware failure or communication errors, and/or be compromised by malicious users.  As a result, upon request, these nodes may send out incorrect responses to disrupt the data regeneration and reconstruction. 
The adversary model is the same as~\cite{Rashmi-err}, 
We assume that the malicious users can take full control of $\tau$ ($\tau \leq n$ and corresponds to $s$ in~\cite{Rashmi-err})  storage nodes and collude to perform attacks. 

We will refer these symbols as \emph{bogus} symbols without making distinction between the corrupted symbols and compromised symbols. We will also use corrupted nodes, malicious nodes and compromised nodes interchangeably without making any distinction.

\section{Component Codes of Rate-matched Regenerating Code}\label{Sec:rate-matched-MSR}
In this section, we will introduce two different component codes for rate-matched MSR code on the MSR point with $d = 2k-2$. The code based on the MSR point with $d > 2k -2$ can be derived the same way through truncating operations. In the rate-matched MSR code, there are two types of MSR codes with different code rates: full rate code and fractional rate code. 

\subsection{Full Rate Code}

\subsubsection{Encoding}

The full rate code is encoded based on the product-matrix code framework in~\cite{Rashmi}. According to equation~(\ref{eq:MSR_tradeoff}), we have $\alpha_H = d/2$, $\beta_H = 1$ for one block of data with the size $B_H=(\alpha+1)\alpha$. The data will be arranged into two $\alpha \times \alpha$ symmetric matrices $S_1,S_2$, each of which will contain $B_H/2$ data. The codeword $\CH$ is defined as
\begin{equation}
\label{eq:encoding_msr_h}
\CH = [\Phi \: \:\: \Lambda\Phi]
\begin{bmatrix}
S_1 \\
S_2
\end{bmatrix}
= \Psi M_H=\begin{bmatrix}\ch_1\\ \vdots\\ \ch_n\end{bmatrix},
\end{equation}
where
\begin{equation}\label{eq:phi}
\Phi =
\begin{bmatrix}
1 & 1 & 1 & \dots  & 1 \\
1 & g & g^2 & \dots  & g^{\alpha-1} \\
\vdots & \vdots & \vdots & \ddots  & \vdots \\
1 & g^{n-1}& (g^{n-1})^2 & \dots  & (g^{n-1})^{\alpha-1} 
\end{bmatrix}
\end{equation}
is a Vandermonde matrix and $\Lambda=\mbox{diag}[\lambda_1,\lambda_2,\cdots,\lambda_n]$ such that $\lambda_i\in \mathbb{F}_q$ and $\lambda_i\ne\lambda_j$ for $1\leq i, j\leq n, i\ne j$, $g$ is a primitive element in $\mathbb{F}_q$, and any $d$ rows of $\Psi$ are linearly independent. Then each row $\ch_i=\boldsymbol{\psi}_iM_H$ ($0 \leq i < n$) of the codeword matrix $\CH$ will be stored in storage node $i$, where the encoding vector $\boldsymbol{\psi}_i$ is the $i^{th}$ row of $\Psi$. 
%\footnote{\red{Changed: $\nu\to \boldsymbol{\psi}$,\ $\mu\to\boldsymbol{\phi}$.\ $\mathbf{V}_{i,j}\to \Psi_{i\to j}$, \ $\phi\to g$.\ $S_H\to M_H$, \ $S_L\to M_L$}}

\subsubsection{Regeneration} \label{Sec:Reg_high_rate}

Suppose node $z$ fails, the replacement node $z'$ will send regeneration requests to the rest of $n-1$ helper nodes. Upon receiving the regeneration request, helper node $i$ will calculate and send out the help symbol $p_i = \ch_i \boldsymbol{\phi}_z^T=\boldsymbol{\psi}_iM_H\boldsymbol{\phi}_z^T$, where $\boldsymbol{\phi}_z$ is the $z^{th}$ row of $\Phi$. $z'$ will perform Algorithm~\ref{alg:reg_h} to regenerate the contents of the failed node $z$.  For convenience, we define
$\Psi_{i\to j}=\begin{bmatrix}
\boldsymbol{\psi}_i^T,
\boldsymbol{\psi}_{i+1}^T
\cdots,
\boldsymbol{\psi}_j^T
\end{bmatrix}^T,
$
where $\boldsymbol{\psi}_t$ is the $t^{th}$ row of $\Psi$ $(i\leq t\leq j)$ and $\mathbf{x}^{(j)}$ is the vector containing the first $j$ symbols of $M_H \boldsymbol{\phi}_z^T$. 

Suppose $p'_i = p_i + e_i$ is the response from the $i^{th}$ helper node. If $p_i$ has been modified by the malicious node $i$, we have $e_i \in{\mathbb{F}_q}\backslash \{0\}$. We can successfully regenerate the symbols in node $z$ when the number of errors in the received help symbols ${p_i}'$ from $n-1$ helper nodes is less than $\lfloor (n-d-1)/2\rfloor$, where $\left\lfloor \cdot\right\rfloor$ is the floor operation. Without loss of generality, we assume $0 \leq i \leq n-2$.

\begin{algorithms}$z'$ regenerates symbols of the failed node $z$ 
\label{alg:reg_h}
\normalfont
\begin{namelist}{\textbf{Step n:}}
\item [\step{1}] Decode $\mathbf{p}'$ to $\mathbf{p}_{cw}$, where $\mathbf{{p}}'=[p'_0, p'_1, \cdots,  p'_{n-2}]^T$ can be viewed as an MDS code with parameters $(n-1,d,n-d)$ since $\Psi_{0\to (n-2)} \cdot \mathbf{{x}}^{(n-1)}  = \mathbf{p}'$.

\item [\step{2}] Solve $\Psi_{0\to (n-2)} \cdot \mathbf{{x}}^{(n-1)}  = \mathbf{p}_{cw}$ and compute $\ch_z=\boldsymbol{\phi}_z S_1 + \lambda_z  \boldsymbol{\phi}_z S_2$ as described in~\cite{Rashmi}.
\end{namelist}
\end{algorithms}

\begin{proposition}
For regeneration, the full rate code can correct errors from $\left\lfloor (n-d-1)/2 \right\rfloor$ malicious nodes, where $\left\lfloor \cdot\right\rfloor$ is the floor operation.
\end{proposition}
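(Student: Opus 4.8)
The plan is to reduce the claim to the standard bounded-distance decoding guarantee for MDS codes, exploiting the algebraic structure already set up for Algorithm~\ref{alg:reg_h}. First I would note that, absent any corruption, the vector of help symbols $\mathbf{p}=[p_0,\dots,p_{n-2}]^T$ satisfies $\mathbf{p}=\Psi_{0\to(n-2)}\,\mathbf{x}^{(n-1)}$, so $\mathbf{p}$ lies in the column space of the $(n-1)\times d$ matrix $\Psi_{0\to(n-2)}$. Since by construction any $d$ rows of $\Psi$ are linearly independent, the same holds for any $d$ rows of $\Psi_{0\to(n-2)}$; hence the linear code $\calC=\{\Psi_{0\to(n-2)}\mathbf{u}:\mathbf{u}\in\mathbb{F}_q^{d}\}$ has the property that every codeword is uniquely determined by any $d$ of its $n-1$ coordinates, i.e.\ $\calC$ is an $(n-1,d,n-d)$ MDS code. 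This justifies the MDS claim invoked in Step~1 of Algorithm~\ref{alg:reg_h}, and in particular $\calC$ has minimum distance $n-d$.

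Next I would account for the errors using the communication model. Each helper node $i$ returns a single symbol $p_i'=p_i+e_i$, and an honest node has $e_i=0$; therefore a set of $\tau$ malicious nodes can perturb at most $\tau$ of the $n-1$ coordinates, so $\mathbf{p}'=\mathbf{p}+\mathbf{e}$ with $\mathbf{p}\in\calC$ and $\mathrm{wt}(\mathbf{e})\le\tau$. If $\tau\le\lfloor(n-d-1)/2\rfloor$, then $\mathrm{wt}(\mathbf{e})$ does not exceed half the minimum distance of $\calC$, so $\mathbf{p}$ is the unique codeword of $\calC$ within Hamming distance $\tau$ of $\mathbf{p}'$; consequently any bounded-distance decoder for the MDS code $\calC$ (for instance a Reed--Solomon--type syndrome decoder as in~\cite{Rashmi-err}) returns $\mathbf{p}_{cw}=\mathbf{p}$ in Step~1.

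Finally, once Step~1 recovers the true $\mathbf{p}_{cw}=\mathbf{p}$, the linear system $\Psi_{0\to(n-2)}\,\mathbf{x}^{(n-1)}=\mathbf{p}$ solved in Step~2 is consistent and, since $\Psi_{0\to(n-2)}$ has full column rank $d$, determines the information vector $\mathbf{x}^{(n-1)}$ uniquely; the computation of $\ch_z=\boldsymbol{\phi}_z S_1+\lambda_z\boldsymbol{\phi}_z S_2$ then proceeds exactly as in~\cite{Rashmi}, so node $z$ is regenerated correctly. Chaining the three steps yields the proposition.

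I expect the only genuine content to be the first paragraph: translating the hypothesis ``any $d$ rows of $\Psi$ are linearly independent'' into the full MDS distance $n-d$ of $\calC$, which is what pins down the correction radius $\lfloor(n-d-1)/2\rfloor$; the error counting and the concluding linear-algebra step are routine. The one modelling point that deserves to be stated explicitly is that a compromised node can tamper only with the single help symbol it is responsible for, which is precisely what bounds $\mathrm{wt}(\mathbf{e})$ by the number of malicious nodes.
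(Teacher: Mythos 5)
Your proof is correct and follows essentially the same route as the paper, which leaves the argument implicit in Step~1 of Algorithm~\ref{alg:reg_h}: the help-symbol vector is a codeword of the $(n-1,d,n-d)$ MDS code generated by $\Psi_{0\to(n-2)}$, each malicious helper corrupts at most one coordinate, and bounded-distance decoding up to $\left\lfloor (n-d-1)/2\right\rfloor$ errors does the rest. Your explicit derivation of the minimum distance from the ``any $d$ rows of $\Psi$ are linearly independent'' property is exactly the missing detail the paper takes for granted.
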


\subsubsection{Reconstruction}\label{Sec:Rec_high_rate}

When the DC needs to reconstruct the original file, it will send reconstruction requests to $n$ storage nodes. Upon receiving the request, node $i$ will send out the symbol vector $\mathbf{c}_i$ to the DC. Suppose $\mathbf{c}'_i= \mathbf{c}_i + \mathbf{e}_i$ is the response from the $i^{th}$ storage node. If $\mathbf{c}_i$ has been modified by the malicious node $i$, we have $\mathbf{e}_i \in \mathbb{F}_q^{\alpha} \backslash \{\mathbf{0}\}$. 

The DC will reconstruct the file as follows: Let
$R' = [{\ch'_{0}}^T, {\ch'_{1}}^T, \cdots, {\ch'_{n-1}}^T]^T$, we have
\begin{equation*}
\Psi 
\begin{bmatrix}
S'_1\\
S'_2
\end{bmatrix} 
=[\Phi \: \:\: \Lambda\Phi] 
\begin{bmatrix}
S'_1\\
S'_2
\end{bmatrix} 
=R',
\end{equation*}
%That is
\begin{equation}\label{eq:rec_eqn}
\Phi  S'_1  \Phi^T + \Lambda  \Phi  S'_2  \Phi^T =   {R}'\Phi^T.
\end{equation}

Let $C=\Phi  S'_1  \Phi^T$, $D=\Phi  S'_2  \Phi^T$, and ${\widehat{R}}'={R}'\Phi^T$,  then
\begin{equation}
C + \Lambda  D  =   {\widehat{R}}'.
\end{equation}
Since $C,D$ are both symmetric, we can solve the non-diagonal elements of $C,D$ as follows:
\begin{equation}
\label{eq:recon_recovery}
\left\{\begin{matrix}
C_{i,j} + \lambda_i \cdot D_{i,j}= {\widehat{R}}'_{i,j}\\
C_{i,j} + \lambda_j \cdot D_{i,j}= {\widehat{R}}'_{j,i}.
\end{matrix}\right.
\end{equation}
Because matrices $C$ and $D$ have the same structure, here we only focus on $C$ (corresponding to $S'_1$). It is straightforward to see that if node $i$ is malicious and there are errors in the $i^{th}$ row of $R'$, there will be errors in the $i^{th}$ row of ${\widehat{R}}'$. Furthermore, there will be errors in the $i^{th}$ row and $i^{th}$ column of $C$. Define $S'_1 \Phi^T=\widehat{S}'_1$, we have $\Phi \widehat{S}'_1 = C$.
Here we can view each column of $C$ as an $(n-1, \alpha, n - \alpha)$ MDS code because $\Phi$ is a Vandermonde matrix. The length of the code is $n - 1$ since the diagonal elements of $C$ is unknown.
Suppose node $j$ is a legitimate node, we can decode the MDS code to recover the $j^{th}$ column of $C$ and locate the malicious nodes. Eventually $C$ can be recovered. 
So the DC can reconstructs $S_1$ using the method similar to~\cite{Rashmi,jian14},  For $S_2$, the recovering process is similar.

\begin{proposition}
For reconstruction, the full rate code can correct errors from $\left\lfloor (n-\alpha-1) /2 \right\rfloor$ malicious nodes.
\end{proposition}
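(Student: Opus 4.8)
The plan is to reduce the claim to a standard bounded‑distance decoding statement for the MDS codes that appear column‑wise in the matrix $C=\Phi S_1' \Phi^T$, exactly as set up in the reconstruction procedure of Section~\ref{Sec:Rec_high_rate}. First I would fix attention on $S_1$ (the argument for $S_2$ being identical, since $D=\Phi S_2' \Phi^T$ has the same structure) and recall that a malicious node $i$ injects an error vector $\mathbf{e}_i\in\mathbb{F}_q^{\alpha}\setminus\{\mathbf 0\}$ into row $i$ of $R'$, hence into row $i$ of $\widehat R' = R'\Phi^T$, and therefore into both row $i$ and column $i$ of $C$. The key structural observation, already noted in the excerpt, is that each column of $C$ equals $\Phi$ times the corresponding column of $\widehat S_1' = S_1'\Phi^T$, and since $\Phi$ is an $n\times\alpha$ Vandermonde matrix, any such column — with its unknown diagonal entry deleted — is a codeword of an $(n-1,\alpha,n-\alpha)$ MDS code.

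Next I would carry out the decoding count. In column $j$ of $C$, the corrupted coordinates are precisely those indices $i\ne j$ for which node $i$ is malicious; there are at most $\tau$ of them, and the entry in position $j$ itself is discarded because the diagonal of $C$ is unknown. An $(n-1,\alpha,n-\alpha)$ MDS code corrects up to $\left\lfloor (n-\alpha-1)/2\right\rfloor$ errors by bounded‑distance decoding. Hence, provided $\tau\le\left\lfloor (n-\alpha-1)/2\right\rfloor$, for every legitimate node $j$ we can decode column $j$ of $C$ correctly and, comparing the decoded codeword with the received column, identify exactly which coordinates were in error — i.e.\ locate the malicious nodes. Since there is at least one legitimate node $j$ (as $\tau\le\lfloor(n-\alpha-1)/2\rfloor<n$), this single successful column decoding already pins down the malicious set; one then re‑decodes every column using the known error locations (now an erasure‑only problem, correctable since $n-1-\tau\ge\alpha$), which recovers all of $C$, and from $C$ one obtains $S_1$ by the inversion in~\cite{Rashmi,jian14}. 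Doing the same for $D$ yields $S_2$, completing the reconstruction.

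The main obstacle — really the only point needing care — is justifying that a \emph{single} correctly decoded legitimate column suffices to globally locate the malicious nodes, and that once located their contribution can be stripped from \emph{all} columns. This rests on two facts: the error pattern is node‑aligned (the same set of at most $\tau$ rows is corrupted across every column, because a malicious node corrupts its entire row of $R'$), and $n-1-\tau\ge \alpha$ under the stated bound, so after converting the located errors to erasures each column still has enough correct symbols to be reconstructed by the MDS property. I would also remark that the bound is tight in the usual sense: with $\tau$ exceeding $\lfloor(n-\alpha-1)/2\rfloor$ the column code can no longer distinguish the true codeword, so no correct reconstruction is guaranteed — which matches the statement. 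Everything else is the routine linear algebra of solving~(\ref{eq:recon_recovery}) for the off‑diagonal entries and then inverting the Vandermonde system, which I would not belabor.
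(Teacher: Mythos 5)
Your proof follows the paper's own argument essentially verbatim: pass to $C=\Phi S_1'\Phi^T$, observe that a malicious node $i$ corrupts only row $i$ and column $i$ of $C$, view each column with its unknown diagonal entry deleted as a word of an $(n-1,\alpha,n-\alpha)$ MDS code, decode a column indexed by a legitimate node to correct up to $\left\lfloor (n-\alpha-1)/2\right\rfloor$ errors and locate the malicious nodes, and then recover $C$, $D$ and hence $S_1,S_2$. The only difference is that you make the locate-then-erase step and the node-aligned error structure more explicit than the paper does; the decomposition and the error count are the same.
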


\subsection{Fractional Rate Code}

\subsubsection{Encoding}
For the fractional rate code, we also have $\alpha_L = d/2$, $\beta_L = 1$ for one block of data with the size 
\setlength\arraycolsep{0pt}%
\begin{equation}\label{eq:BL}
B_L\!=\!\left\{
\begin{matrix}
&xd(1+xd)/2, &x \in (0,0.5]\\ 
&\!\alpha(\alpha \! \!+ \!\!1)/2\! +\!\! (x \!-\!0.5)d(1\!\!+\!\!(x\!-\!0.5)d)/2, &x \in\!\! (0.5,\!1] \end{matrix}\right.,
\end{equation}
where $x$ is the match factor of the rate-matched MSR code. It is easy to see that the fractional rate code will become the full rate code with $x=1$. The data ${\mathbf{m}} = [m_1,m_2,\dots,m_{B_L}] \in(\mathbb{F}_q)^{B_L}$ will be processed as follows: 

When $x \leq 0.5$, the data will be arranged into a symmetric matrix $S_1$ of the size $\alpha \times \alpha$:
\setlength\arraycolsep{3pt}%
\begin{equation}
S_1 = \begin{bmatrix}
m_1 & m_2 & \dots &  m_{xd} &0 &\dots &0\\
m_2 & m_{xd+1} & \dots  & m_{2xd-1} &0 &\dots &0\\
\vdots & \vdots  & \ddots  & \vdots &\vdots  & \ddots  & \vdots\\
m_{xd} & m_{2xd-1} & \dots  &  m_{B_L/2}&0 &\dots &0\\
0 & 0 &\dots & 0 & 0 &\dots & 0\\
\vdots & \vdots  & \ddots  & \vdots &\vdots  & \ddots  & \vdots\\
0 & 0 &\dots & 0 & 0 &\dots & 0
\end{bmatrix}.
\end{equation}
 The codeword $\CL$ is defined as
\begin{equation}
\label{eq:encoding_msr_l}
\CL = [\Phi \: \:\: \Lambda\Phi]
\begin{bmatrix}
S_1 \\
\mathbf{0}
\end{bmatrix}
= \Psi M_L,
\end{equation}
where $\mathbf{0}$ is the $\alpha \times \alpha$ zero matrix and $\Phi, \Lambda, \Psi$ are the same as the full rate code. 

When $x>0.5$, the first $\alpha(\alpha+1)/2$ data will be arranged into an $\alpha \times \alpha$ symmetric matrix $S_1$. The rest of the data $m_{\alpha(\alpha+1)/2 +1}, \dots, m_{B_L}$ will be arranged into another  $\alpha \times \alpha$ symmetric matrix $S_2$:
\setlength\arraycolsep{1.9pt}%
\begin{equation}
S_2 = \begin{bmatrix}
m_{\alpha(\alpha+1)/2 +1} &  \dots &  m_{\alpha(\alpha+1)/2 +xd} &0 &\dots &0\\
m_{\alpha(\alpha+1)/2 +2} &  \dots  & m_{\alpha(\alpha+1)/2 +2xd-1} &0 &\dots &0\\
\vdots &  \ddots  & \vdots &\vdots  & \ddots  & \vdots\\
m_{\alpha(\alpha+1)/2 +xd} &  \dots  &  m_{B_L/2}&0 &\dots &0\\
0 &\dots & 0 & 0 &\dots & 0\\
\vdots   & \ddots  & \vdots &\vdots  & \ddots  & \vdots\\
0 & \dots & 0 & 0 &\dots & 0
\end{bmatrix}.
\end{equation}
The codeword $\CL$ is defined the same as equation~(\ref{eq:encoding_msr_h}) with the same parameters  $\Phi, \Lambda$ and $\Psi$.

Then each row $\cl_i$ ($0 \leq i < n$) of the codeword matrix $\CL$ will be stored in storage node $i$ respectively, in which the encoding vector $\boldsymbol{\psi}_i$ is the $i^{th}$ row of $\Psi$.

\begin{proposition}
The fractional rate code can achieve the MSR point in equation~(\ref{eq:MSR_tradeoff}) since it it encoded under the product-matrix MSR code framework in~\cite{Rashmi}.
\end{proposition}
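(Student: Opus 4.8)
The plan is to recognize the fractional rate code as nothing more than the product-matrix MSR code of~\cite{Rashmi} with a prescribed subset of message coordinates forced to zero, and then to transport the optimality guarantee of that construction without change.

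First I would record the structural identity. The fractional rate codeword $\CL=\Psi M_L$ in~(\ref{eq:encoding_msr_l}) is built from exactly the same ingredients as the full rate codeword $\CH=\Psi M_H$ in~(\ref{eq:encoding_msr_h}): the encoding matrix $\Psi=[\Phi\ \ \Lambda\Phi]$, the Vandermonde matrix $\Phi$ of~(\ref{eq:phi}), and the diagonal matrix $\Lambda$ are identical, so in particular every $d$ rows of $\Psi$ remain linearly independent and the pair $(\alpha_L,\beta_L)=(d/2,1)$ coincides with $(\alpha_H,\beta_H)$. The sole difference is that the symmetric message matrices feeding $M_L$ are allowed to have some of their free entries pinned to $0$ (all of $S_2$ and part of $S_1$ when $x\le 0.5$, part of $S_2$ when $x>0.5$). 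Hence the set of fractional rate codewords is a subset of the set of product-matrix MSR codewords carried by the same parameters $\{n,k,d,\alpha,\beta\}$.

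Next I would check that these shared parameters land at the MSR corner. Since this section fixes $d=2k-2$, we get $\alpha=d/2=k-1$ and the nominal block size $B_H=\alpha(\alpha+1)=k\alpha$, which is exactly the value that makes $(\alpha,\gamma)=(\alpha_H,d\beta_H)$ meet the minimum-cut bound~(\ref{eq:min_cut}) with equality at the point~(\ref{eq:MSR_tradeoff}). By~\cite{Rashmi}, the product-matrix MSR code attains this point: for any codeword, a replacement node recovers the $\alpha$ symbols of a failed node from one symbol downloaded from each of $d$ helpers, and the data collector recovers both message matrices from the $\alpha$ symbols of any $k$ nodes --- precisely the error-free specializations of the procedures in Sections~\ref{Sec:Reg_high_rate} and~\ref{Sec:Rec_high_rate}. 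Since these algorithms only manipulate $\Psi$ and $\Phi$ and succeed for every choice of message matrices, they a fortiori succeed for every $M_L$ whose extra coordinates are zero, so the fractional rate code inherits the MSR-optimal storage--bandwidth tradeoff, the zero-padding being merely the rate-matching device.

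I do not expect a genuine obstacle here: the statement is essentially a corollary of~\cite{Rashmi}, and the only thing that calls for attention is the parameter bookkeeping --- verifying that pinning entries of $M_L$ to zero preserves both the genericity conditions on $\Psi$ and the count $B_H=k\alpha$ that pins the code to the MSR corner --- which is immediate precisely because $\Psi,\Phi,\Lambda$ are left untouched. If anything, the subtle point worth stating explicitly is that ``achieving the MSR point'' refers to the code's block parameters $(\alpha_L,\beta_L)=(d/2,1)$ with nominal size $B_H$, not to the reduced payload $B_L$; the rate matching trades usable data for a lower effective code rate without moving the operating point off the MSR tradeoff curve.
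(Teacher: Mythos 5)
Your proposal is correct and follows essentially the same route as the paper, which offers no separate proof at all: the justification is folded into the statement itself (the fractional rate code is a product-matrix MSR codeword of~\cite{Rashmi} with some message entries pinned to zero, so the guarantee of~\cite{Rashmi} transfers verbatim), and your structural-identity argument is just the careful spelling-out of that one line. Your closing caveat is the one genuinely valuable addition: since $B_L<B_H=k\alpha$ for $x<1$, the code meets the point~(\ref{eq:MSR_tradeoff}) only with respect to the nominal block size $B_H$, not the reduced payload $B_L$, a distinction the paper leaves implicit.
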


\subsubsection{Regeneration}\label{sec:reg:frac}

The regeneration for the fractional rate code is the same as the regeneration for the full rate code described in Section~\ref{Sec:Reg_high_rate} with only a minor difference. If we define $\mathbf{x}^{(j)}$ as the vector containing the first $j$ symbols of $M_L \boldsymbol{\phi}_z^T$, there will be only $xd$ nonzero elements in the vector. According to $\boldsymbol{\Psi}_{0\to {n-2}} \cdot \mathbf{{x}}^{(n-1)}  = \mathbf{p}'$, the received symbol vector $\mathbf{p}'$ for the fractional rate code in {\bf{Step 1}} of Algorithm~\ref{alg:reg_h} can be viewed as an $(n-1,xd,n-xd)$ MDS code. Since $x<1$, we can detect and correct more errors in data regeneration using the fractional rate code than using the full rate code. 
\begin{proposition}
For regeneration, the fractional rate code can correct errors from $\left\lfloor (n-xd-1) /2 \right\rfloor$ malicious nodes.
\end{proposition}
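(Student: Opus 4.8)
\emph{Proof idea.} The plan is to reduce the statement to the standard bounded-distance decoding bound for MDS codes, exactly mirroring the argument behind the full rate proposition but with the effective dimension lowered from $d$ to $xd$. First I would make precise the support claim used in Section~\ref{sec:reg:frac}: for the fractional rate code the vector $\mathbf{x}^{(n-1)}$ appearing in \textbf{Step 1} of Algorithm~\ref{alg:reg_h} has exactly $xd$ nonzero entries, all lying in its first $xd$ positions. When $x\le 0.5$ only the top-left $xd\times xd$ block of $S_1$ carries data and $S_2$ is the zero matrix, so $S_1\boldsymbol{\phi}_z^T$ is supported on its first $xd$ coordinates and hence $M_L\boldsymbol{\phi}_z^T$ on coordinates $1,\dots,xd$. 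When $x>0.5$, $S_1$ is full so $S_1\boldsymbol{\phi}_z^T$ occupies all $\alpha$ coordinates, while only the top-left $(x-0.5)d\times(x-0.5)d$ block of $S_2$ carries data, so $S_2\boldsymbol{\phi}_z^T$ is supported on its first $(x-0.5)d$ coordinates; since $\alpha+(x-0.5)d=d/2+xd-d/2=xd$, the support again lies in the first $xd$ positions of $M_L\boldsymbol{\phi}_z^T$ (note $xd\le d\le n-1$, so these positions are all present in $\mathbf{x}^{(n-1)}$).

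Consequently the identity $\Psi_{0\to(n-2)}\,\mathbf{x}^{(n-1)}=\mathbf{p}$ depends only on the restriction $G$ of $\Psi_{0\to(n-2)}$ to its first $xd$ columns, an $(n-1)\times xd$ matrix, and $\mathbf{p}$ ranges over the code generated by $G$. I would then argue that $G$ generates an $(n-1,xd,n-xd)$ MDS code, i.e.\ that every $xd\times xd$ minor of $G$ is nonsingular: for $x\le 0.5$ those columns are the first $xd$ columns of the Vandermonde matrix $\Phi$, so each such minor is a Vandermonde determinant in the distinct points $g^i$ and is nonzero; for $x>0.5$ the columns are the first $\alpha$ columns of $\Phi$ together with the first $(x-0.5)d$ columns of $\Lambda\Phi$, and nonsingularity of the corresponding (generalized-Vandermonde) minors follows from the distinctness of the $g^i$ and of the $\lambda_i$, which is precisely the non-degeneracy built into the product-matrix construction of~\cite{Rashmi} (the case $x=1$ recovers the already-used fact that any $d$ rows of $\Psi$ are linearly independent). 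Hence $\mathbf p$ lies in an MDS code of length $n-1$ and minimum distance $\delta=n-xd$.

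Finally, the received word is $\mathbf{p}'=\mathbf{p}+\mathbf{e}$, where $\mathbf{e}$ is supported exactly on the coordinates coming from malicious helper nodes, so its Hamming weight equals the number of such nodes. Since a code of minimum distance $\delta$ corrects every error pattern of weight at most $\lfloor(\delta-1)/2\rfloor$, the decoding in \textbf{Step 1} recovers $\mathbf{p}_{cw}=\mathbf{p}$ whenever at most $\lfloor(n-xd-1)/2\rfloor$ helper nodes are malicious; given the correct $\mathbf p$, \textbf{Step 2} solves $\Psi_{0\to(n-2)}\mathbf{x}^{(n-1)}=\mathbf{p}$ and reconstructs the contents of node $z$ verbatim as in Section~\ref{Sec:Reg_high_rate} and in~\cite{Rashmi}, completing the regeneration. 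The only part that is not a word-for-word copy of the full rate argument is the MDS claim for $G$ in the regime $x>0.5$; I expect that to be the main obstacle, and I would dispatch it either through the explicit generalized-Vandermonde determinant or by invoking the corresponding non-degeneracy lemma of the product-matrix framework in~\cite{Rashmi}.
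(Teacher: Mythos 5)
Your proposal is correct and follows essentially the same route as the paper: the paper's own justification (in the regeneration subsection for the fractional rate code) simply observes that $M_L\boldsymbol{\phi}_z^T$ has only $xd$ nonzero entries, so the received vector in \textbf{Step 1} of Algorithm~\ref{alg:reg_h} lies in an $(n-1,xd,n-xd)$ MDS code and bounded-distance decoding corrects $\left\lfloor (n-xd-1)/2\right\rfloor$ errors. The only difference is that you explicitly verify the support claim in both regimes and isolate the MDS property of the restricted generator matrix for $x>0.5$ as the step needing justification, a detail the paper leaves implicit by appealing to the product-matrix framework of~\cite{Rashmi}.
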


\subsubsection{Reconstruction}

The reconstruction for the fractional rate code is similar to that for the full rate code described in Section~\ref{Sec:Rec_high_rate}. Let
$R' = [{\cl'}_0^T, {\cl'}_1^T, \cdots, {\cl'}_{n-1}^T]^T$. 

When the match factor $x>0.5$, reconstruction for the fractional rate code is the same to that for the full rate code.

When $x \le 0.5$, equation~(\ref{eq:rec_eqn}) can be written as:
\begin{equation}\label{eq:rec_eqn_low}
\Phi  S'_1 =   R'.
\end{equation}
So we can view each column of $R'$ as an $(n,xd,n-xd+1)$ MDS code. After decoding $R'$ to $R_{cw}$, we can recover the data matrix $S_1$ by solving the equation $\Phi  S_1 =   R_{cw}.$ Meanwhile, if the $i^{th}$ rows of $R'$ and $R_{cw}$ are different, we can mark node $i$ as corrupted.
\begin{proposition}
For reconstruction, when the match factor $x>0.5$, the fractional rate code can correct errors from $\left\lfloor (n-\alpha-1) /2 \right\rfloor$ malicious nodes. When the match factor $x \leq 0.5$, the fractional rate code can correct errors from $\left\lfloor (n-xd) /2 \right\rfloor$ malicious nodes.
\end{proposition}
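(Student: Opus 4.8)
The plan is to split along the two regimes of the match factor $x$, handling $x>0.5$ by reduction to the already-established full rate case and $x\le 0.5$ directly from equation~(\ref{eq:rec_eqn_low}).

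\emph{Case $x>0.5$.} Here $S_1$ is a full $\alpha\times\alpha$ symmetric matrix and $S_2$ is an $\alpha\times\alpha$ symmetric matrix that is only partially populated, so the codeword matrix $\CL$ has exactly the algebraic form of a full rate codeword $\CH$ for a particular message. Consequently the DC runs the full rate reconstruction procedure of Section~\ref{Sec:Rec_high_rate} verbatim: form $\widehat R'=R'\Phi^T$, split into $C+\Lambda D=\widehat R'$, solve the $2\times 2$ systems~(\ref{eq:recon_recovery}) for the off-diagonal entries of $C$ and $D$, and treat each column of $C$ (resp.\ $D$) as an $(n-1,\alpha,n-\alpha)$ MDS code generated by $\Phi$ with the unknown diagonal entry punctured out. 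The one point to check is that the sparsity of $S_2$ does not shrink these parameters; this is immediate since each column of $D=\Phi S_2'\Phi^T$ still has the form $\Phi v$ for some $v\in\mathbb{F}_q^{\alpha}$, so it lies in the $[n,\alpha]$ Vandermonde code regardless of how many entries of $S_2'$ vanish. Hence bounded-distance decoding corrects up to $\lfloor (n-\alpha-1)/2\rfloor$ corrupted nodes, exactly as in the full rate reconstruction proposition.

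\emph{Case $x\le 0.5$.} Now $S_2=\mathbf{0}$ and, by the arrangement in~(\ref{eq:BL}) and the displayed form of $S_1$, every column of the true matrix $S_1$ is supported on its first $xd$ coordinates (note $xd\le d/2=\alpha\le n$). Writing $\Phi_{xd}$ for the $n\times xd$ submatrix of the first $xd$ columns of $\Phi$, equation~(\ref{eq:rec_eqn_low}) shows that each column of the uncorrupted matrix $R$ equals $\Phi_{xd}$ times the corresponding length-$xd$ column of $S_1$; since any $xd$ rows of $\Phi_{xd}$ form an invertible Vandermonde matrix, each column of $R$ is a codeword of an $(n,xd,n-xd+1)$ GRS-type MDS code. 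The $i$-th malicious node perturbs the entire row $\mathbf{e}_i$, so in every column the error support lies in the common set of at most $\tau$ malicious nodes; when $\tau\le\lfloor (n-xd)/2\rfloor$ this is within the decoding radius $\lfloor (d_{\min}-1)/2\rfloor=\lfloor (n-xd)/2\rfloor$, so column-by-column bounded-distance decoding recovers $R_{cw}=R$ exactly. Solving $\Phi S_1=R_{cw}$ then returns the true $S_1$ and hence $\mathbf{m}$, while the rows where $R'$ and $R_{cw}$ disagree pinpoint the corrupted nodes.

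\emph{Main obstacle.} The MDS-distance bookkeeping is routine; the step needing the most care is combining the per-column decoders into one consistent reconstruction and node-location procedure, i.e.\ verifying that confining all column error supports to the $\tau$ compromised nodes suffices, and that in the $x>0.5$ branch the sparsity of $S_2$ genuinely does not degrade the effective code. Both reduce to the Vandermonde/MDS property already used for the full rate code, so no new machinery is required.
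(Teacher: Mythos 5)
Your proposal is correct and follows essentially the same route as the paper: for $x>0.5$ it reduces reconstruction to the full rate procedure of Section~\ref{Sec:Rec_high_rate} (yielding the punctured $(n-1,\alpha,n-\alpha)$ MDS decoding radius $\lfloor (n-\alpha-1)/2\rfloor$), and for $x\le 0.5$ it uses $\Phi S'_1=R'$ to view each column of $R'$ as an $(n,xd,n-xd+1)$ MDS codeword, giving $\lfloor (n-xd)/2\rfloor$. Your added checks (that the sparsity of $S_2$ does not degrade the Vandermonde code, and that all column error supports are confined to the same $\tau$ rows) only make explicit what the paper leaves implicit.
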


\section{2-Layer Rate-matched regenerating Code}\label{sec:2-layer}

In this section, we will show our first optimization of the rate-matched MSR code: 2-layer rate-matched MSR code. In the code design, we utilize two layers of the MSR code: the fractional rate code for one layer and the full rate code for the other. The purpose of the fractional rate code is to correct the erroneous symbols sent by malicious nodes and locate the corresponding malicious nodes. Then we can treat the errors in the received symbols as erasures when regenerating with the full rate code. However, the rates of the two codes must match to achieve an optimal performance. Here we mainly focus on the rate-matching for data regeneration. We can see in the later analysis that the performance of data reconstruction can also be improved with this design criterion. 

%\delete{The main idea of this optimization is: first fixing the error correction capabilities of the fractional rate code and the full rate code by making their rate matched, then optimizing the data storage efficiency by adjusting the number of data blocks of different codes.}

We will first fix the error correction capabilities of the full rate code and the fractional rate code.  Then we will derive the optimal rate matching criteria to optimize the data storage efficiency under the fixed error correction capability.

\subsection{Rate Matching}

From the analysis above, we know that during regeneration, the fractional rate code can correct up to $\left \lfloor (n-xd-1)/2  \right \rfloor$ errors, which are more than $\left \lfloor (n-d-1)/2  \right \rfloor$ errors that the full rate code can correct. In the 2-layer rate-matched MSR code design, our goal is to match the fractional rate code with the full rate code. The main task for the fractional rate code is to detect and correct errors, while the main task for the full rate code is to maintain the storage efficiency. So if the fractional rate code can locate all the malicious nodes, the full rate code can simply treat the symbols received from these malicious nodes as erasures, which requires the minimum redundancy for the full rate code. The full rate code can correct up to $n-d-1$ erasures. Thus we have the following optimal rate-matching equation:
\begin{equation}\label{eq:rate-match}
\left \lfloor (n-xd-1)/2  \right \rfloor = n-d-1,
\end{equation}
from which we can derive the match factor $x$.

\subsection{Encoding}

To encode a file with size $B_F$ using the 2-layer rate-matched MSR code, the file will first be divided into $\theta_H$ blocks of data with the size $B_H$ and $\theta_L$ blocks of data with the size $B_L$, where the parameters should satisfy
\begin{equation}
\label{eq:size_equation}
B_F = \theta_H B_H +\theta_L B_L.
\end{equation}
Then the $\theta_H$ blocks of data will be encoded into code matrices $\CH_1,\dots,\CH_{\theta_H}$ using the full rate code and the $\theta_L$ blocks of data will be encoded into code matrices $\CL_1,\dots,\CL_{\theta_L}$ using the fractional rate code. To prevent the malicious nodes from corrupting the fractional rate code only, the secure server will randomly concatenate all the matrices together to form the final $n \times \alpha(\theta_H + \theta_L)$ codeword matrix:
\begin{equation}
\label{eq:final_codeword_matrix}
\mathsf{CM} = [\mbox{Perm}(\CH_1,\dots,\CH_{\theta_H},\CL_1,\dots,\CL_{\theta_L})],
\end{equation}
where $\mbox{Perm}(\cdot)$ is the random matrices permutation operation. The secure sever will also record the order of the permutation for future code regeneration and reconstruction. Then each row $\mathbf{c}_i = [\mbox{Perm}(\ch_{1,i}, \dots, \ch_{\theta_H,i},\cl_{1,i}, \dots, \cl_{\theta_L,i})]$ ($0 \leq i \leq n-1$) of the codeword matrix $\mathsf{CM}$ will be stored in storage node $i$, where $\ch_{j,i}$ is the $i^{th}$ row of $\CH_j$ ($1 \leq j \leq \theta_H$), and  $\cl_{j,i}$ is the $i^{th}$ row of $\CL_j$ ($1 \leq j \leq \theta_L$). The encoding vector $\boldsymbol{\psi}_i$ for storage node $i$ is the $i^{th}$ row of $\Psi$ in equation~(\ref{eq:encoding_msr_h}). Therefore, we have the following Theorem.
\begin{theorem}
The encoding of 2-layer rate-matched MSR code can achieve the MSR point in equation~(\ref{eq:MSR_tradeoff}) since both the full rate code and the fractional code are MSR codes.
\end{theorem}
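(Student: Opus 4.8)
The plan is to deduce the claim directly from two facts already established and one structural observation. The two facts are: every full rate codeword $\CH_j$ is, by its construction in~(\ref{eq:encoding_msr_h}), a product-matrix MSR code sitting on the trade-off point~(\ref{eq:MSR_tradeoff}); and every fractional rate codeword $\CL_j$ sits on~(\ref{eq:MSR_tradeoff}) as well, by the earlier Proposition on the fractional rate code. The structural observation is that $\mathsf{CM}$ in~(\ref{eq:final_codeword_matrix}) is nothing but a column-wise direct sum of these $\theta_H+\theta_L$ codewords, merely reordered by the fixed invertible permutation $\mbox{Perm}(\cdot)$ that the secure server records, and that all of them are generated with the \emph{same} encoding vectors $\boldsymbol{\psi}_i$ (the rows of $\Psi$).

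First I would record the per-layer parameters: each $\CH_j$ and each $\CL_j$ stores $\alpha=d/2$ symbols per node, is regenerated with $\beta=1$ symbol from each of $d$ helpers, and is reconstructed from $k$ nodes. Next I would argue that because $\mbox{Perm}(\cdot)$ is known to the DC and to replacement nodes, a legitimate decoder undoes it first, after which node $i$ holds $\ch_{1,i},\dots,\ch_{\theta_H,i},\cl_{1,i},\dots,\cl_{\theta_L,i}$ in disjoint blocks; hence a regeneration request, or a reconstruction request, for $\mathsf{CM}$ splits into $\theta_H+\theta_L$ independent instances, one per block, whose help symbols, or downloaded symbols, simply concatenate. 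Then I would add up: node $i$ stores $\alpha(\theta_H+\theta_L)$ symbols, a replacement node downloads $(\theta_H+\theta_L)\,d\beta$ symbols, the DC downloads $\alpha(\theta_H+\theta_L)$ symbols from each of $k$ nodes, and the total stored file size is $B_F=\theta_H B_H+\theta_L B_L$ by~(\ref{eq:size_equation}). Since each component codeword already lies on~(\ref{eq:MSR_tradeoff}) and that relation is homogeneous of degree one in the stored file size, the componentwise sums again satisfy~(\ref{eq:MSR_tradeoff}), which is exactly the claim for $\mathsf{CM}$.

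The one point that needs a careful word --- and the closest thing to an obstacle --- is confirming that layering together with the permutation creates no coupling between components during repair or reconstruction, i.e.\ that decoding one block never requires symbols of another. This holds because each block is encoded by its own independent message matrix ($M_H$ for a full rate block, $M_L$ for a fractional rate block) and shares only the public matrix $\Psi$, and because $\mbox{Perm}(\cdot)$ is a coordinate bijection that the legitimate decoder removes before anything else; an adversary still tampers only with whole rows (nodes), so it cannot manufacture cross-block dependencies either. Once that is noted, the linear bookkeeping above finishes the argument.
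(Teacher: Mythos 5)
Your proposal is correct and follows essentially the same route as the paper, which states this theorem with only the inline justification that both component codes are MSR codes and supplies no further proof. The details you add --- that $\mathsf{CM}$ is a permuted direct sum of component codewords sharing the same $\Psi$, and that the MSR relation~(\ref{eq:MSR_tradeoff}) is degree-one homogeneous in $B$ so the per-node storage, repair bandwidth, and file size all add consistently --- are exactly the bookkeeping the paper leaves implicit.
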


\subsection{Regeneration}

Suppose node $z$ fails, the security server will initialize a replacement node $z'$ with the order information of the fractional rate code and the full rate code in the 2-layer rate-matched MSR code. Then the replacement node $z'$ will send regeneration requests to the rest of $n-1$ helper nodes. Upon receiving the regeneration request, helper node $i$ will calculate and send out the help symbol $p_i = {\bf{c}}_i \boldsymbol{\phi}_z^T$. $z'$ will perform Algorithm~\ref{alg:reg_rate_matched} to regenerate the contents of the failed node $z$. After the regeneration is finished, $z'$ will erase the order information. So even if $z'$ was compromised later, the adversary would not get the permutation order of the fractional rate code and the full rate code.

\begin{algorithms}$z'$ regenerates symbols of the failed node $z$ for the 2-layer rate-matched MSR code
\label{alg:reg_rate_matched}
\normalfont
\begin{namelist}{\textbf{Step n:}}
\item [\step{1}] According to the order information, regenerate all the symbols related to the $\theta_L$ data blocks encoded by the fractional rate code, using Algorithm~\ref{alg:reg_h}. If errors are detected in the symbols sent by node $i$, it will be marked as a malicious node.

\item [\step{2}] Regenerate all the symbols related to the $\theta_H$ data blocks encoded by the full rate code, using Algorithm~\ref{alg:reg_h}. During the regeneration, all the symbols sent from nodes marked as malicious nodes will be replaced by erasures $\bigotimes$.
\end{namelist}
\end{algorithms}

It is easy to see that Algorithm~\ref{alg:reg_rate_matched} can correct errors and locate malicious node using the fractional rate code while achieve high storage efficiency using the full rate code. We summarize the result as the following Theorem.
\begin{theorem}
For regeneration, the 2-layer rate-matched MSR code can correct errors from $\left\lfloor (n-xd-1) /2 \right\rfloor$ malicious nodes.
\end{theorem}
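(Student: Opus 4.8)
The plan is to verify that Algorithm~\ref{alg:reg_rate_matched} succeeds whenever at most $\lfloor (n-xd-1)/2\rfloor$ helper nodes are malicious, by tracking the two steps separately and using the rate-matching equation~(\ref{eq:rate-match}) to connect them. First I would invoke Proposition~4 (the regeneration bound for the fractional rate code): since the $\theta_L$ fractional-rate blocks are decoded via Algorithm~\ref{alg:reg_h}, treating $\mathbf{p}'$ as an $(n-1, xd, n-xd)$ MDS code, any pattern of at most $\lfloor(n-xd-1)/2\rfloor$ errors is correctable, and moreover the decoder identifies exactly which coordinates were in error. Because the secure server randomly permuted the $\CH$ and $\CL$ blocks together (equation~(\ref{eq:final_codeword_matrix})), a node that corrupts its stored row corrupts its contribution to every block, including all fractional-rate blocks; hence Step 1 not only corrects the fractional-rate symbols but flags every malicious helper node that introduced any error. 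This is the crux: the permutation guarantees the adversary cannot corrupt only the full-rate layer, so the malicious set detected in Step 1 is a superset of the malicious set active in Step 2.

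Next I would analyze Step 2. After Step 1, all $\tau' \le \lfloor(n-xd-1)/2\rfloor$ flagged nodes are replaced by erasures in the full-rate decoding. By Proposition~1's underlying MDS structure, the help symbols for a full-rate block form an $(n-1, d, n-d)$ MDS code, which can correct up to $n-d-1$ erasures (with no residual errors, since every corrupted coordinate has been erased). So it suffices to check $\tau' \le n-d-1$. By the rate-matching equation~(\ref{eq:rate-match}), $\lfloor(n-xd-1)/2\rfloor = n-d-1$, so the erasure-correction budget of the full rate code exactly matches the error-correction budget of the fractional rate code, and Step 2 goes through. Combining the two steps, the whole algorithm tolerates $\lfloor(n-xd-1)/2\rfloor$ malicious nodes, which is the claimed bound.

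The one point requiring care — and the main obstacle — is justifying that Step 1 returns the \emph{exact} set of erroneous coordinates rather than merely correcting the codeword: a bounded-distance MDS decoder recovers the transmitted codeword, and comparing it coordinate-wise with $\mathbf{p}'$ yields precisely the error locations, so every malicious helper that altered its response is caught; conversely, an honest helper is never flagged. I would also note the degenerate consistency check that $n-d-1 \ge 0$ and that the floor in~(\ref{eq:rate-match}) admits a valid $x \in (0,1]$, so the construction is non-vacuous. With these observations the theorem follows from Propositions 1 and 4 together with~(\ref{eq:rate-match}); no new computation beyond what is already in the excerpt is needed.
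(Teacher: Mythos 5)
Your two-step decomposition --- the fractional-rate blocks locate the malicious nodes, the full-rate blocks then treat those nodes as erasures, and the rate-matching equation~(\ref{eq:rate-match}) makes the erasure budget $n-d-1$ of the full rate code equal to the error budget $\lfloor(n-xd-1)/2\rfloor$ of the fractional rate code --- is exactly the argument the paper intends (the paper itself offers only the remark that this ``is easy to see''). However, the step you yourself flag as the crux is where your proof breaks: it is not true that a node that corrupts its stored row thereby corrupts its contribution to every block. A malicious helper sends one help symbol per block and is free to corrupt an arbitrary subset of them; the random permutation in~(\ref{eq:final_codeword_matrix}) only hides from the adversary \emph{which} blocks are fractional-rate, it does not force the adversary to touch them. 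A malicious node that happens to alter only full-rate help symbols is invisible to Step~1 of Algorithm~\ref{alg:reg_rate_matched}, and then Step~2 can fail: with $\tau'$ detected nodes erased and $M-\tau'$ undetected malicious nodes still injecting errors, the $(n-1,d,n-d)$ MDS code requires $2(M-\tau')+\tau'\le n-d-1=M$, i.e.\ $\tau'\ge M$, so correction succeeds only if \emph{every} malicious node was caught in Step~1.

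This is precisely why the paper treats detection as probabilistic: each help symbol is modeled as corrupted independently with probability $P$, and the constraint $(1-(1-P)^{\theta_L})^M\ge P_{\det}$ in~(\ref{eq:det_suc}) is imposed so that the guarantee holds with probability at least $P_{\det}$ --- equivalently, conditioned on every malicious node having corrupted at least one fractional-rate help symbol. To repair your proof, either state this conditioning explicitly or add it as a hypothesis. The rest of your argument is sound: bounded-distance MDS decoding within $\lfloor(n-xd-1)/2\rfloor$ errors returns the exact error locations and never flags an honest helper, and the full rate code corrects $n-d-1$ pure erasures.
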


\subsection{Parameters Optimization}\label{sec:para_opt_reg}

We have the following design requirements for a given distributed storage system applying the 2-layer rate-matched MSR code:
\begin{itemize}
\item The maximum number of malicious nodes $M$ that the system can detect and locate using the fractional rate code. We have 
\begin{equation}\label{eq:error_correction}
\left \lfloor  (n-xd-1)/2\right \rfloor = M.
\end{equation}
\item The probability $P_{\det}$ that the system can detect all the malicious nodes. The detection will be successful if each malicious node modifies at least one help symbol corresponding to the fractional rate code and sends it to the replacement node. Suppose the malicious nodes modify each help symbol to be sent to the replacement node with probability $P$, we have
\begin{equation}\label{eq:det_suc}
 (1-(1-P)^{\theta_L})^M \ge P_{\det}.
\end{equation}
\end{itemize}
So there is a trade-off between $\theta_L$ and $\theta_H$: the number of data blocks encoded by the fractional rate code and the number of data blocks encoded by the full rate code. If we encode using too much full rate code, we may not meet the detection probability $P_{\det}$ requirement. If too much fractional rate code is used, the redundancy may be too high.

The storage efficiency is defined as the ratio between the actual size of data to be stored and the total storage space needed by the encoded data:
\begin{equation}\label{eq:efficiency}
\delta_S = \frac{\theta_H B_H + \theta_L B_L}{(\theta_H + \theta_L)n\alpha} = \frac{B_F}{(\theta_H + \theta_L)n\alpha}.
\end{equation}
Thus we can calculate the optimized parameters $x$, $d$, $\theta_H$, $\theta_L$  by maximizing equation~(\ref{eq:efficiency}) under the constraints defined by equations~(\ref{eq:rate-match}), (\ref{eq:size_equation}), (\ref{eq:error_correction}), (\ref{eq:det_suc}).

$d$ and $x$ can be determined by equation (\ref{eq:rate-match}) and (\ref{eq:error_correction}):
\begin{eqnarray}
d &=& n - M - 1,\\
x &=& (n - 2M -1) / (n - M -1).
\end{eqnarray}
Since $B_F$ is constant, to maximize $\delta_S$ is equal to minimize $\theta_H + \theta_L$. So we can rewrite the optimization problem as follows:
\begin{equation}
\mbox{Minimize} \:\:\: \theta_H + \theta_L,\:\:\: \mbox{subject to}~(\ref{eq:size_equation})\:\: \mbox{and} \:\:(\ref{eq:det_suc}).
\end{equation}
This is a simple linear programming problem. Here we will show the optimization results directly:
%\begin{equation}
%\theta_L = log_{(1-P)}(1-P_{\det}^{1/M}),
%\end{equation}
%\begin{equation}
%\theta_H = (B_F -  \theta_L B_L) / B_H.
%\end{equation}
\begin{eqnarray}
\theta_L &=& \log_{(1-P)}(1-P_{\det}^{1/M}),\\
\theta_H &=& (B_F -  \theta_L B_L) / B_H.
\end{eqnarray}
In this paper we assume that we are storing large files, which means $B_F >  \theta_L B_L $. So an optimal solution for the 2-layer rate-matched MSR code can always be found. We have the following theorem:
\begin{theorem}
When the number of blocks of the fractional rate code $\theta_L$ equals to $\log_{(1-P)}(1-P_{\det}^{1/M})$ and the number of blocks of the full rate code $\theta_H$ equals to $(B_F -  \theta_L B_L) / B_H$, the 2-layer rate-matched MSR code can achieve the optimal storage efficiency.
\end{theorem}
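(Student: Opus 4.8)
The plan is to recognize this as the constrained optimization already set up in Section~\ref{sec:para_opt_reg}: maximize $\delta_S$ of~(\ref{eq:efficiency}) over $\theta_H,\theta_L$ subject to the file-size constraint~(\ref{eq:size_equation}) and the detection constraint~(\ref{eq:det_suc}), where $d$ and $x$ — and hence the block sizes $B_H=(\alpha+1)\alpha$ and $B_L$ from~(\ref{eq:BL}) — are already fixed constants determined by~(\ref{eq:rate-match}) and~(\ref{eq:error_correction}). The first step is the observation already noted in the text: since $B_F$, $n$, $\alpha$ are fixed, maximizing $\delta_S=B_F/((\theta_H+\theta_L)n\alpha)$ is equivalent to minimizing the total block count $\theta_H+\theta_L$.

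Next I would use~(\ref{eq:size_equation}) to eliminate $\theta_H=(B_F-\theta_L B_L)/B_H$, so that the objective becomes a function of the single variable $\theta_L$:
\[
\theta_H+\theta_L=\frac{B_F}{B_H}+\theta_L\!\left(1-\frac{B_L}{B_H}\right).
\]
The crucial structural fact is $B_L<B_H$: from~(\ref{eq:BL}) the quantity $B_L$ is continuous and strictly increasing in the match factor $x$ on $(0,1]$ and reaches $B_H=\alpha(\alpha+1)$ only at $x=1$, while the rate-matching equation forces $x=(n-2M-1)/(n-M-1)<1$ whenever $M\ge 1$. Hence $1-B_L/B_H>0$, so $\theta_H+\theta_L$ is a strictly increasing affine function of $\theta_L$, and its minimum over the feasible set is attained at the smallest admissible value of $\theta_L$.

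The last step is to read off that smallest admissible $\theta_L$ from~(\ref{eq:det_suc}). Rearranging $(1-(1-P)^{\theta_L})^M\ge P_{\det}$ gives $(1-P)^{\theta_L}\le 1-P_{\det}^{1/M}$, and taking logarithms with base $1-P\in(0,1)$ — which reverses the inequality — yields $\theta_L\ge\log_{(1-P)}(1-P_{\det}^{1/M})$. Therefore the optimum is $\theta_L=\log_{(1-P)}(1-P_{\det}^{1/M})$, and substituting back into $\theta_H=(B_F-\theta_L B_L)/B_H$ produces the claimed $\theta_H$; the large-file assumption $B_F>\theta_L B_L$ guarantees $\theta_H>0$, so the pair is feasible, and by the monotonicity above it is optimal. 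This gives exactly the statement of the theorem.

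\textbf{Main obstacle.} The optimization itself is essentially a one-variable linear program once $\theta_H$ is eliminated, so the delicate points are (i) the monotonicity lemma $B_L<B_H$, which requires unpacking the piecewise definition~(\ref{eq:BL}) and checking it is increasing across $x=0.5$, and (ii) the integrality of $\theta_H$ and $\theta_L$: $\log_{(1-P)}(1-P_{\det}^{1/M})$ need not be an integer, so a fully rigorous version should either round $\theta_L$ up to the next integer (the monotonicity still makes that the optimum) or be understood as the real-relaxation optimum, which is how the paper states it. I would also remark that the floor in~(\ref{eq:rate-match}) leaves a short interval of admissible $x$, and choosing the largest such $x$ (as done here) is itself efficiency-optimal, since a larger $x$ enlarges $B_L$ and thereby shrinks the positive coefficient $1-B_L/B_H$ multiplying $\theta_L$.
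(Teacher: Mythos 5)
Your proposal is correct and is essentially the paper's own (largely unwritten) argument: the paper simply declares the problem ``a simple linear programming problem'' and states the answer, whereas you supply the actual solution --- eliminating $\theta_H$ via the size constraint, observing that $B_L<B_H$ makes $\theta_H+\theta_L$ increasing in $\theta_L$, and reading the binding value $\theta_L=\log_{(1-P)}(1-P_{\det}^{1/M})$ off the detection constraint. Your added remarks on the integrality of $\theta_L$ and on why the largest admissible $x$ is the right choice address genuine gaps the paper leaves open, and both are handled correctly.
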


\subsection{Reconstruction}
When DC needs to reconstruct the original file, it will send reconstruction requests to $n$ storage nodes. Upon receiving the request, node $i$ will send out the symbol vector $\mathbf{c}_i$. Suppose $\mathbf{c}'_i= \mathbf{c}_i + \mathbf{e}_i$ is the response from the $i^{th}$ storage node. If $\mathbf{c}_i$ has been modified by the malicious node $i$, we have $\mathbf{e}_i \in \mathbb{F}_q^{\alpha(\theta_L + \theta_H)} \backslash \{\mathbf{0}\}$. Since DC has the permutation information of the fractional rate code and the full rate code, similar to the regeneration of the 2-layer rate-matched MSR code, DC will perform the reconstruction using Algorithm~\ref{alg:rec_rate_matched}.

\begin{algorithms}DC reconstructs the original file for the 2-layer rate-matched MSR code
\label{alg:rec_rate_matched}

\normalfont

\begin{namelist}{\textbf{Step n:}}
\item [\step{1}] According to the order information, reconstruct each of the $\theta_L$ data blocks encoded by the fractional rate code and locate the malicious nodes. 

\item [\step{2}] Reconstruct each of the data blocks encoded by the full rate code. During the reconstruction, all the symbols sent from malicious nodes will be replaced by erasures $\bigotimes$.
\end{namelist}
\end{algorithms}

%\subsubsection{Optimized Parameters}
In Section~\ref{sec:para_opt_reg}, we optimized the parameters for the data regeneration, considering the trade-off between the successful malicious node detection probability and the storage efficiency. For data reconstruction, we have the following theorem:
\begin{theorem}[Optimized Parameters]
When the number of blocks of the fractional rate code $\theta_L$ equals to $log_{(1-P)}(1-P_{\det}^{1/M})$ and the number of blocks of the full rate code $\theta_H$ equals to $(B_F -  \theta_L B_L) / B_H$, the 2-layer rate-matched MSR code can guarantee that the same constraints for data regeneration (equation~(\ref{eq:error_correction}),~(\ref{eq:det_suc}) ) be satisfied for the data reconstruction.
\end{theorem}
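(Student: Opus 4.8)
The plan is to check the reconstruction analogues of the two regeneration constraints one at a time, reusing the propositions on the fractional rate code together with the closed forms obtained in Section~\ref{sec:para_opt_reg}. The only facts about the parameters I would need are $d=n-M-1$, $x=(n-2M-1)/(n-M-1)$, $\alpha=d/2$, and that $\theta_L=\log_{(1-P)}(1-P_{\det}^{1/M})$ is exactly the value for which $(1-(1-P)^{\theta_L})^M=P_{\det}$.

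First I would verify the reconstruction version of constraint~(\ref{eq:error_correction}): that the fractional rate layer can still detect and locate $M$ malicious nodes when reconstructing. By the corresponding Proposition, this layer corrects errors from $\lfloor (n-xd)/2\rfloor$ malicious nodes when $x\le 0.5$ and from $\lfloor (n-\alpha-1)/2\rfloor$ when $x>0.5$. A direct computation with the parameter formulas shows that $x\le 0.5$ is equivalent to $n\le 3M+1$, so that $xd=n-2M-1$ and $\lfloor (n-xd)/2\rfloor=\lfloor (2M+1)/2\rfloor=M$; and $x>0.5$ is equivalent to $n\ge 3M+2$, so that $n-\alpha-1=(n+M-1)/2$ and $\lfloor (n-\alpha-1)/2\rfloor=\lfloor (n+M-1)/4\rfloor\ge\lfloor (4M+1)/4\rfloor=M$. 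Hence in either regime the reconstruction correction capability of the fractional rate code is at least $M$, so the guarantee of $M$ locatable malicious nodes carries over from regeneration to reconstruction.

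Next I would verify the reconstruction version of constraint~(\ref{eq:det_suc}). The only change from regeneration is that during reconstruction each node returns its entire symbol vector, so a malicious node now carries $\alpha\theta_L$ symbols of the fractional rate code rather than a single help symbol per block. Since $\alpha\ge 1$, the probability that such a node alters at least one fractional-rate symbol is $1-(1-P)^{\alpha\theta_L}\ge 1-(1-P)^{\theta_L}$, so the probability of flagging all $M$ malicious nodes is $(1-(1-P)^{\alpha\theta_L})^M\ge(1-(1-P)^{\theta_L})^M=P_{\det}$, i.e.\ the detection requirement holds, in fact with slack. For completeness I would add that, once the $M$ malicious nodes have been located by the fractional rate layer, the full rate layer treats their contributions as erasures and its reconstruction MDS code corrects $n-\alpha-1=(n+M-1)/2\ge M$ erasures, so the full-rate layer succeeds as well, and combining the two layers gives the theorem.

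The step needing care is the case split in the error-correction argument — one must track the floor functions and the equivalence between $x$ crossing $0.5$ and $n$ crossing $3M+1$ — but there is no genuine obstacle, because the detection bound is monotone in $\alpha$ and the fractional-rate correction bound never decreases when passing from regeneration to reconstruction; that monotonicity is the structural reason the theorem is true.
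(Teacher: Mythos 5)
Your proposal is correct and follows essentially the same route as the paper: the same split into the two constraints, the same case analysis on $x\le 0.5$ versus $x>0.5$ showing the fractional rate code's reconstruction correction capability is at least $M$, and the same monotonicity observation $(1-(1-P)^{\alpha\theta_L})^M\ge(1-(1-P)^{\theta_L})^M\ge P_{\det}$. You merely verify the first part by substituting the closed forms $d=n-M-1$ and $xd=n-2M-1$ where the paper uses the one-line floor inequality $\lfloor (n-\alpha-1)/2\rfloor\ge\lfloor (n-xd-1)/2\rfloor=M$; the added remark on the full-rate layer's erasure correction is harmless extra completeness.
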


%Here we will show that the same parameters \delete{can guarantee that the same constraints be} satisfie\replace{d}{s the same constraints} for the data reconstruction.

\begin{proof} 
The maximum number of malicious nodes can be detected for the data reconstruction is no smaller than $M$: if $x>0.5$, the number is $\left \lfloor (n - \alpha - 1) /2 \right \rfloor$. We have $\left \lfloor (n - \alpha - 1) /2 \right \rfloor \ge \left \lfloor (n - xd - 1) /2 \right \rfloor = M$. If $x \le 0.5$, the number is $\left \lfloor (n -xd) /2 \right \rfloor$. We have $\left \lfloor (n -xd) /2 \right \rfloor \ge \left \lfloor (n - xd - 1) /2 \right \rfloor = M$.

The successful malicious node detection probability for the data reconstruction is larger than $P_{\det}$: the probability is $(1-(1-P)^{\alpha\theta_L})^M $, so we have $(1-(1-P)^{\alpha\theta_L})^M > (1-(1-P)^{\theta_L})^M \ge P_{\det}$.
\end{proof}

Although the rate-matching equation~(\ref{eq:rate-match}) does not apply to the data reconstruction, the reconstruction strategy in Algorithm~\ref{alg:rec_rate_matched} can still benefit from the different rates of the two codes. When $x \le 0.5$, the fractional rate code can detect and correct $\left \lfloor (n - xd) /2 \right \rfloor$ malicious nodes, which are more than  $\left \lfloor (n - d/2 - 1) /2 \right \rfloor$ malicious nodes that the full rate code can detect. When $x>0.5$, the full rate code and the fractional rate code can detect and correct the same number of malicious nodes:  $\left \lfloor (n - \alpha - 1) /2 \right \rfloor$.

From the analysis above we can see that the same optimized parameters, which are obtained for the data regeneration, can maintain the optimized trade-off between the malicious node detection and storage efficiency for the data reconstruction. 

\subsection{Performance Evaluation}
\label{Sec:performance}

%\subsubsection{Security Analysis}
From the analysis above, we know that for a distributed storage system with $n$ storage nodes out of which at most $M$ nodes are malicious, the 2-layer rate-matched MSR code can guarantee detection and correction of the malicious nodes during the data regeneration and reconstruction with the probability at least $P_{\det}$.

%\subsubsection{Performance Evaluation}
For a distributed storage system with $n=30$, $M=11$ and $P=0.2$, suppose we have a file with the size $B_F=14000M$ symbols to be stored in the system. The number of the fractional rate code blocks $\theta_L$ and the number of the full rate code blocks $\theta_H$ for different detection probabilities $P_{\det}$ are shown in Fig.~\ref{fig:number_of_blocks}. From the figure we can see that the number of fractional rate code blocks will increase when the detection probability becomes larger. Accordingly, the number of full rate code blocks will decrease.

\begin{figure}
\centering
\includegraphics[width=.8\columnwidth]{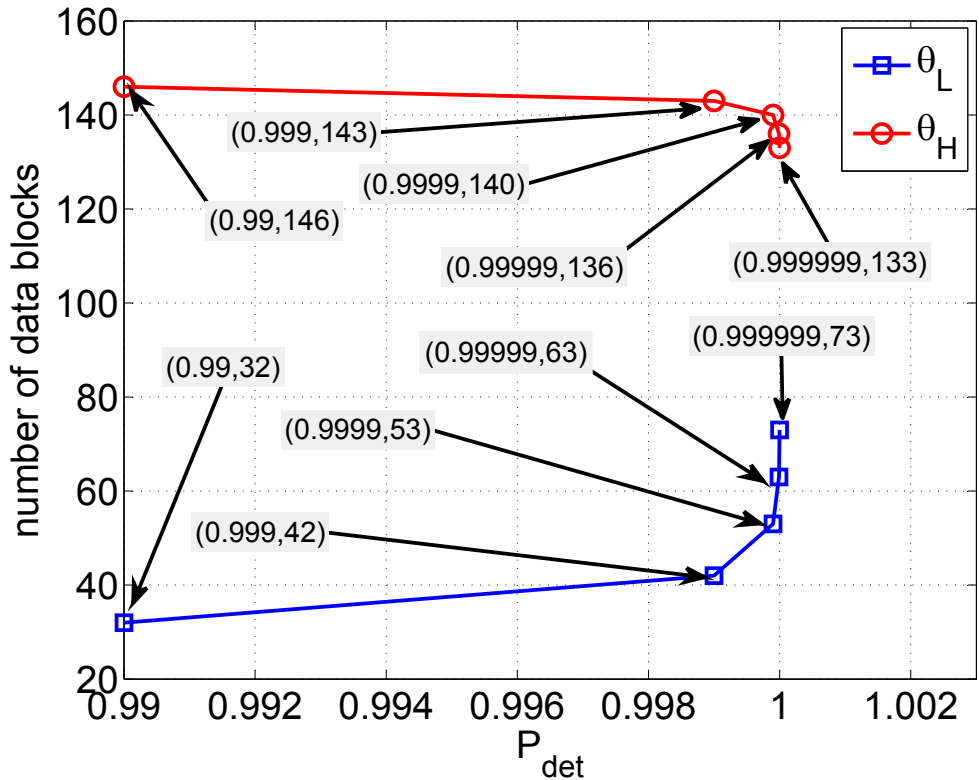}
\caption{The number of fractional/full rate code blocks for different $P_{\det}$}
\label{fig:number_of_blocks}
\end{figure}

For the universally resilient MSR code constructed in~\cite{Rashmi-err}, the efficiency of the code with the same regeneration performance as the 2-layer rate-matched MSR code is defined as 
\begin{equation}
{\delta'}\!\!_S = \frac{\alpha'(\alpha' + 1)}{\alpha' n} = \frac{\alpha'+1}{n} = \frac{xd/2+1}{n}.
\end{equation}
In Fig.~\ref{fig:efficiency_comp} we will show the efficiency ratios $\eta = \delta_S / {\delta'}\!\!_S$ between the 2-layer rate-matched MSR code and the universally resilient MSR code under different detection probabilities $P_{\det}$. From the figure we can see that the 2-layer rate-matched MSR code has higher efficiency than the universally resilient MSR code. When the successful malicious nodes detection probability is $0.999999$, the efficiency of the 2-layer rate-matched MSR code is about $70\%$ higher.

\begin{figure}
\centering
\includegraphics[width=.8\columnwidth]{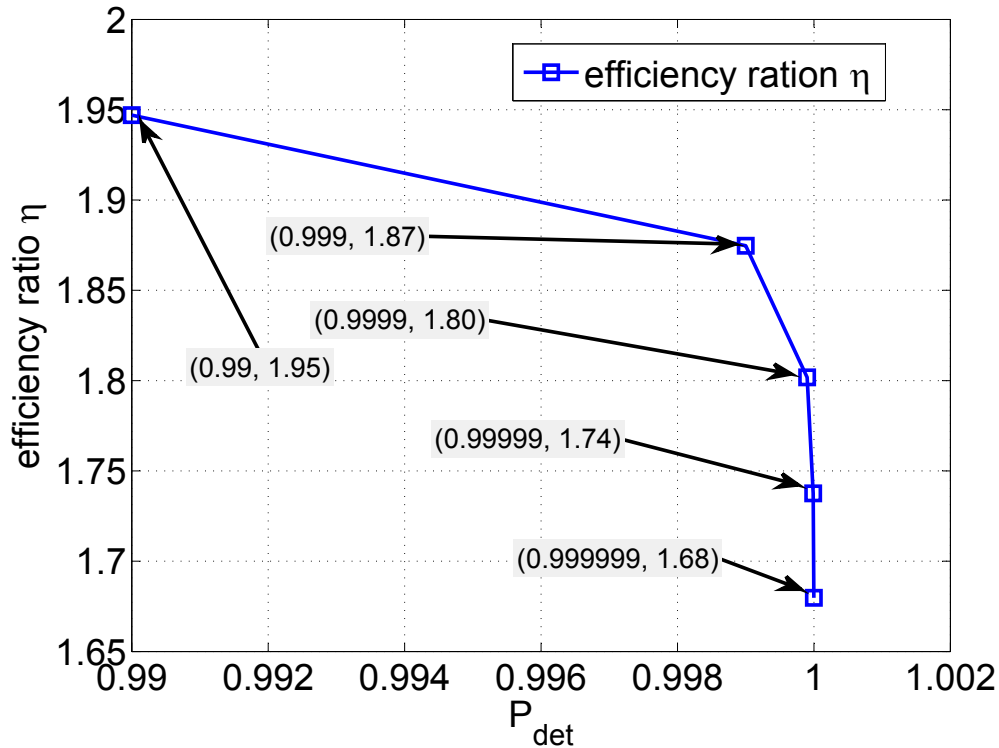}
\caption{Efficiency ratios between the 2-layer rate-matched MSR code and the normal error correction MSR code for different $P_{\det}$}
\label{fig:efficiency_comp}
\end{figure}

\section{$m$-Layer Rate-matched regenerating Code}\label{sec:m_layer_msr}

In this section, we will show our second optimization of the rate-matched MSR code: $m$-layer rate-matched MSR code. In the code design, we extend the design concept of the 2-layer rate-matched MSR code. Instead of encoding the data using two MSR codes with different match factors, we utilize $m$ layers of the full rate MSR codes with different parameter $d$, written as $d_i$ for layer $L_i$, $1 \leq i \leq m$, which satisfy 
\begin{equation}\label{eqn:m_layer_assump}
d_i \leq d_j,\:\: \forall 1 \leq i \leq j \leq m. 
\end{equation}
The data will be divided into $m$ parts and each part will be encoded by a distinct full rate MSR code. According to the analysis above, the code with a lower code rate has better error correction capability. 

The codewords will be decoded layer by layer in the order from layer $L_1$ to layer $L_m$. That is, the codewords encoded by the full rate MSR code with a lower $d$ will be decoded prior to those encoded by the full rate MSR code with a higher $d$. If errors were found by the full rate MSR code with a lower $d$, the corresponding nodes would be marked as malicious. The symbols sent from these nodes would be treated as erasures in the subsequent decoding of the full rate MSR codes with higher $d$'s. The purpose of this arrangement is to try to correct as many as erroneous symbols sent by malicious nodes and locate the corresponding malicious nodes using the full rate MSR code with a lower rate. However, the rates of the $m$ full rate MSR codes must match to achieve an optimal performance. Here we mainly focus on the rate-matching for data regeneration. We can see in the later analysis that the performance of data reconstruction can also be improved with this design criterion. 

The main idea of this optimization is to optimize the overall error correction capability by matching the code rates of different full rate MSR codes.
%first fix the number of data blocks encoded by the fractional MSR codes with different match factors, then 
%For better illustration, in the optimization there is only one data block corresponding to one factional MSR code.

\subsection{Rate Matching and Parameters Optimization}\label{sec:rate_matching_m}
According to Section~\ref{Sec:Reg_high_rate}, the full rate MSR code $\CH_i$ for layer $L_i$ can be viewed as an $(n-1,d_i,n-d_i)$ MDS code for $1 \leq i \leq m$. During the optimization, we set the summation of the $d$'s of all the layers to a constant $d_0$:
\begin{equation}\label{eqn:sumd}
\sum_{i=1}^{m} d_i = d_0.
\end{equation}
Here we will show the optimization through an illustrative example first. Then we will present the general result.

\subsubsection{Optimization for $m=3$}
There are three layers of full rate MSR codes for $m=3$: $\CH_1$, $\CH_2$ and $\CH_3$. 

The first layer code $\CH_1$ can correct $t_1$ errors:
\begin{equation}\label{eqn:3_t1}
t_1 = \left \lfloor (n-d_1-1)/2 \right \rfloor = (n-d_1-1 - \varepsilon_1 )/2,
\end{equation}
where $\varepsilon_1 = 0$ or $1$ depending on whether $ (n-d_1-1)/2 $ is even or odd. 

By regarding the symbols from the $t_1$ nodes where errors are found by $\CH_1$ as erasures, the second layer code $\CH_2$ can correct $t_2$ errors:
\begin{equation}
\begin{array}{rcl}
t_2 & = & \left \lfloor (n-d_2 - 1 - t_1)/2 \right \rfloor + t_1  \\ 
 & = &  (n-d_2 - 1 - t_1 - \varepsilon_2)/2 + t_1  \\
& = & (2(n-d_2) + n - d_1 -2\varepsilon_2 - \varepsilon_1 - 3)/4, 
\end{array}
\end{equation}
where  $\varepsilon_2 = 0$ or $1$, with the restriction that $n-d_2-1 \ge t_1$, which can be written as:
\begin{equation}\label{eqn:l1l2}
-d_1 + 2d_2 \leq n + \varepsilon_1 - 1.
\end{equation}

The third layer code $\CH_3$ also treat the symbols from the $t_2$ nodes as erasures. $\CH_3$ can correct $t_3$ errors:
\begin{eqnarray}\label{eqn:opt_t3}
%\begin{array}{rcl}
t_3 & = & \left \lfloor (n-d_3 - 1 - t_2)/2 \right \rfloor + t_2  \nonumber  \\ 
 & = &  (n-d_3 - 1 - t_2 - \varepsilon_2)/2 + t_2  \\
& = & (4(n-d_3) + 2(n-d_2) + n - d_1\! -\! 4\varepsilon_3\! -\! 2\varepsilon_2 \!-\! \varepsilon_1\! -\! 7)/8, \nonumber 
%\end{array}
\end{eqnarray}
where  $\varepsilon_3 = 0$ or $1$, with the restriction that $n-d_3-1 \ge t_2$, which can be written as:
\begin{equation}\label{eqn:l2l3}
-d_1 - 2d_2 + 4d_3 \leq n + \varepsilon_1 + 2\varepsilon_2 - 1.
\end{equation}

According to the analysis above, the $d$'s of the three layers satisfy:
\begin{eqnarray}
d_1 - d_2 & \le & 0, \label{eqn:x1x2}  \\
d_2 - d_3 & \le & 0. \label{eqn:x2x3}  
\end{eqnarray}
And we can rewrite equation~(\ref{eqn:sumd}) as:
\begin{eqnarray}
d_1 + d_2 + d_3 & \le & d_0, \label{eqn:sumd_positive} \\
-d_1 - d_2 - d_3 & \le & -d_0. \label{eqn:sumd_negative} 
\end{eqnarray}

To maximize the error correction capability of the $m$-layer rate-matched MSR code for $m=3$, we have to maximize $t_3$, the number of errors that the third layer code $\CH_3$ can correct, since $t_3$ has included all the malicious nodes from which errors are found by the codes of first two layers. With all the constraints listed above, the optimization problem can written as:

\begin{equation}
\begin{array}{lll}
\mbox{Maximize} &\ & t_3\:\: \mbox{in}~(\ref{eqn:opt_t3}),  \\
\mbox{subject to} && (\ref{eqn:l1l2}),\: (\ref{eqn:l2l3}),\: (\ref{eqn:x1x2}),\: (\ref{eqn:x2x3}),\: (\ref{eqn:sumd_positive}),\: (\ref{eqn:sumd_negative}).
\end{array}
\end{equation}

Now we have changed this optimization problem into a typical linear programming problem. This linear programming problem has a feasible solution.  We solve it using the SIMPLEX algorithm~\cite{IA}. When $d_1 = d_2 = d_3 = \mbox{Round} (d_0/3)=\widetilde{d} $, the $m$-layer rate-matched MSR code can correct errors from at most
\begin{eqnarray}
\widetilde{t}_3 &=& (7n - 7\widetilde{d}  - 4\varepsilon_3 - 2\varepsilon_2 - \varepsilon_1- 7)/8 \nonumber \\
&\ge&  (7n - 7\widetilde{d}  -14)/8 \:\:\mbox{(worst case)}
\end{eqnarray}
malicious nodes, where Round($\cdot$) is the rounding operation.

\subsubsection{Evaluation of the Optimization for $m=3$}\label{sec:eva_3}
Similar to the storage efficiency $\delta_S$ defined in Section~\ref{sec:2-layer}, here we can define the error correction efficiency $\delta_C$ of the $m$-layer rate-matched MSR code as the ratio between the maximum number of malicious nodes that can be found and the total number of storage nodes in the network:
\begin{equation}
\delta_C = (7n - 7\widetilde{d} -14)/(8n).
\end{equation}
The universally resilient MSR code with the same code rate can be viewed as an $(n-1,\widetilde{d}, n - \widetilde{d})$ MDS code which can correct errors from at most $(n - \widetilde{d} - 1) /2$ malicious nodes (best case). So the error correction efficiency ${\delta'}\!\!_C$ is
\begin{equation}
{\delta'}\!\!_C =  ( n - \widetilde{d} - 1) / (2n).
\end{equation}
The comparison of the error correction capability between $m$-layer rate-matched MSR code for $m=3$ and universally resilient MSR code is shown in Fig.~\ref{fig:delta_c_3}. In this comparison, we set the number of storage nodes in the network $n=30$. From the figure we can see that the $m$-layer rate-matched MSR code for $m=3$ improves the error correction efficiency more than $50\%$.
\begin{figure}
\centering
\includegraphics[width=.8\columnwidth]{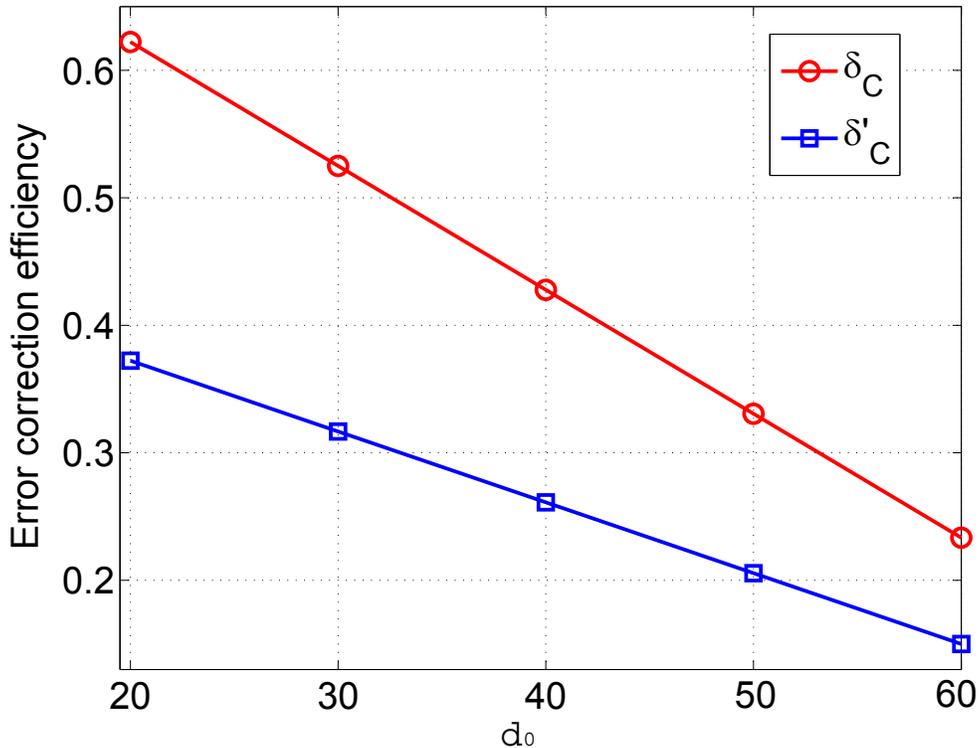}
\caption{Comparison of the error correction capability between $m$-layer rate-matched MSR code for $m=3$ and universally resilient MSR code}
\label{fig:delta_c_3}
\end{figure}

\subsubsection{General Optimization Result}
For the general $m$-layer rate-matched MSR code, the optimization process is similar. 

The first layer code $\CH_1$ can correct $t_1$ errors as in equation~(\ref{eqn:3_t1}). By regarding the symbols from the $t_{i-1}$ nodes where errors are found by $\CH_{i-1}$ as erasures, the $i^{th}$ layer code can correct $t_i$ errors for $2\le i \le m$:
\begin{eqnarray}
t_i & = & \left \lfloor (n-d_i - 1 - t_{i-1})/2 \right \rfloor + t_{i-1} \nonumber \\ 
 & = &  (n-d_i - 1 - t_{i-1} - \varepsilon_i)/2 + t_{i-1}  \label{eqn:opt_n_ti}\\
&=&\left(\sum_{j=1}^{i} 2^{j-1}(n-d_j) - \sum_{j=1}^i{2^{j-1}\varepsilon_j - 2^i+1}\right)/2^{i},\nonumber
\end{eqnarray}
where $\varepsilon_i=0$ or $1$, with the restriction that $n-d_i - 1 \ge t_{i-1} $, which can be written as:
\begin{equation}\label{eqn:restri_n}
-\sum_{j=1}^{i-1} 2^{j-1}d_j + 2^{i-1}d_i \leq n + \sum_{j=1}^{i-1} 2^{j-1}\varepsilon_j - 1.\\
\end{equation}

Similarly, the parameter $d$ of the $i^{th}$ layer for $2\le i\le m$ must satisfy
\begin{equation}\label{eqn:n_x1x2}
d_{i-1} - d_i \le 0.
\end{equation}
And equation~(\ref{eqn:sumd}) can be written as:
\begin{eqnarray}
\sum_{j=1}^{m}d_j & \le & d_0, \label{eqn:n_sumd_positive} \\
-\sum_{j=1}^{m}d_j & \le & -d_0. \label{eqn:n_sumd_negative} 
\end{eqnarray}

We can maximize the error correction capability of the $m$-layer rate-matched MSR code by maximizing $t_m$. With all the constrains listed above, the optimization problem can be written as:
\begin{equation}
\label{eqn:opt_original}
\begin{array}{lll}
\mbox{Maximize}  &\ & t_i\ \mbox{for}\ i=m \ \mbox{in}~(\ref{eqn:opt_n_ti}),  \\
\mbox{subject to} &\ & (\ref{eqn:restri_n}) \ \mbox{and} \ (\ref{eqn:n_x1x2}) \ \mbox{for} \ 2\le i \le m, \  (\ref{eqn:n_sumd_positive}),\ (\ref{eqn:n_sumd_negative}).
\end{array}
\end{equation}

After verifying that this linear programming problem has a feasible solution, we can use the SIMPLEX algorithm to solve it. The optimization result can be summarized as follows:
\begin{theorem} \label{thm:m-layer-reg}
For the regeneration of $m$-layer rate-matched MSR code, when 
\begin{equation}\label{eqn:opt_result_d}
d_i= \rm{Round}(d_0/m)=\widetilde{d}\:\:\:\: \mbox{for}\:\: 1 \le i \le m,
\end{equation}
it can correct errors from at most
\begin{eqnarray}
\widetilde{t}_m &=& ((2^{m}-1)(n-\widetilde{d}) -  \sum_{j=1}^m 2^{j-1}\varepsilon_j - 2^m + 1)/2^m\nonumber \\
&\ge&((2^m\!-\!1)(n\!-\!\widetilde{d})\! -\! 2^{m+1}\! +\! 2)/2^m \:\:\mbox{(worst case)}.
\end{eqnarray}
malicious nodes.
\end{theorem}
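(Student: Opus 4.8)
The plan is to show that the linear program in~(\ref{eqn:opt_original}) is feasible, and that at its optimum all the layer parameters coincide at $d_i = \widetilde d = \mathrm{Round}(d_0/m)$, yielding the claimed value of $\widetilde t_m$. First I would establish feasibility: the symmetric choice $d_1 = \cdots = d_m = \mathrm{Round}(d_0/m)$ satisfies~(\ref{eqn:n_x1x2}) trivially (all differences are zero), satisfies~(\ref{eqn:n_sumd_positive}) and~(\ref{eqn:n_sumd_negative}) up to the rounding slack, and satisfies the chain restrictions~(\ref{eqn:restri_n}) as long as $n$ is large enough relative to $\widetilde d$ (the regime assumed throughout the paper, where there is enough redundancy for the lowest-rate layer to correct the accumulated erasure count $t_{m-1}$). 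This gives a feasible point and hence a well-defined optimum.

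Next I would argue optimality of the symmetric point. Unwinding the recursion~(\ref{eqn:opt_n_ti}) gives the closed form
\begin{equation*}
t_m = \Bigl(\sum_{j=1}^{m} 2^{j-1}(n-d_j) - \sum_{j=1}^{m} 2^{j-1}\varepsilon_j - 2^m + 1\Bigr)/2^m,
\end{equation*}
so maximizing $t_m$ is, up to the fixed $\varepsilon$-terms, the same as maximizing $\sum_{j=1}^m 2^{j-1}(n-d_j)$, i.e.\ minimizing $\sum_{j=1}^m 2^{j-1} d_j$ subject to $\sum_j d_j = d_0$, the monotonicity constraints $d_{j-1}\le d_j$, and the restrictions~(\ref{eqn:restri_n}). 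The weighted objective $\sum_j 2^{j-1} d_j$ rewards putting weight on the small-index (lightly weighted) layers, but the constraint $d_{j-1}\le d_j$ forbids making early $d_j$'s smaller than later ones; together with the fixed sum $d_0$, the only way to satisfy both pressures is to equalize all $d_j$. I would make this rigorous by a simple exchange/averaging argument: given any feasible $(d_1,\dots,d_m)$ that is not constant, replacing it by its average (rounded appropriately) preserves the sum, preserves monotonicity, keeps~(\ref{eqn:restri_n}) satisfied, and does not decrease $\sum 2^{j-1}(n-d_j)$ — or, equivalently, cite that the SIMPLEX solution of this LP is the vertex $d_i = \widetilde d$, as the paper does. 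Substituting $d_i = \widetilde d$ into the closed form for $t_m$ and collecting terms gives
\begin{equation*}
\widetilde t_m = \bigl((2^m-1)(n-\widetilde d) - \textstyle\sum_{j=1}^m 2^{j-1}\varepsilon_j - 2^m + 1\bigr)/2^m .
\end{equation*}

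Finally, the worst-case bound follows by taking each $\varepsilon_j$ to its maximal value $1$, so that $\sum_{j=1}^m 2^{j-1}\varepsilon_j \le 2^m - 1$, whence $\widetilde t_m \ge \bigl((2^m-1)(n-\widetilde d) - (2^m-1) - 2^m + 1\bigr)/2^m = \bigl((2^m-1)(n-\widetilde d) - 2^{m+1} + 2\bigr)/2^m$, which is the stated inequality. The main obstacle I anticipate is not the arithmetic but justifying rigorously that the equalized point is genuinely optimal rather than merely feasible: one must check that the monotonicity constraints~(\ref{eqn:n_x1x2}) are exactly what prevents shifting mass toward the low-weight layers, and that the restrictions~(\ref{eqn:restri_n}) remain slack at the symmetric point (so they are not the binding constraints) — this last point quietly uses the large-file / high-redundancy assumption and should be stated explicitly. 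The rounding in $\mathrm{Round}(d_0/m)$ also needs a word, since $d_0/m$ need not be an integer; absorbing the $\pm 1/2$ slack into the $\varepsilon$-terms or into a negligible additive constant is the cleanest way to handle it.
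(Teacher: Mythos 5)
Your proposal is correct and follows essentially the same route as the paper, which itself offers no proof beyond formulating the linear program~(\ref{eqn:opt_original}) and invoking SIMPLEX; your averaging/exchange argument (equivalently, Chebyshev's sum inequality applied to the increasing weights $2^{j-1}$ and the nondecreasing $d_j$ with fixed sum) is exactly the missing justification that the symmetric point minimizes $\sum_j 2^{j-1}d_j$ and hence maximizes $t_m$, and your substitution and the bound $\sum_{j=1}^m 2^{j-1}\varepsilon_j \le 2^m-1$ reproduce the stated worst case. (One immaterial slip: the monotonicity constraint $d_{j-1}\le d_j$ forbids making the early $d_j$'s \emph{larger} than the later ones, which is what the minimization would otherwise prefer; the tension you describe is real but stated backwards, and your conclusion is unaffected.)
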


The error correction efficiency for the $m$-layer rate-matched MSR code is
\begin{equation}\label{eqn:efficiency_m_layer}
\delta_C = ((2^{m}-1)(n-\widetilde{d}) - 2^{m+1} + 2)/(2^{m}n).
\end{equation}
This is a monotonically increasing function for $m$, so we have:
\begin{corollary}\label{cor:monoincrease}
The error correction efficiency of the $m$-layer rate-matched MSR code increases with m, which is the number of layers.
\end{corollary}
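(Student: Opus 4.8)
The claim in Corollary~\ref{cor:monoincrease} is that
\[
\delta_C(m) = \frac{(2^m-1)(n-\widetilde d) - 2^{m+1} + 2}{2^m n}
\]
is monotonically increasing in $m$. The plan is to treat $m$ as a continuous or discrete parameter and show $\delta_C(m+1) - \delta_C(m) > 0$, or equivalently rewrite $\delta_C$ in a form where the $m$-dependence is transparent. First I would pull out the factor $1/n$ (a positive constant) and divide numerator and denominator by $2^m$, obtaining
\[
\delta_C(m) = \frac{1}{n}\Bigl[(n-\widetilde d)\bigl(1 - 2^{-m}\bigr) - 2 + 2^{1-m}\Bigr]
 = \frac{1}{n}\Bigl[(n-\widetilde d) - 2 - (n-\widetilde d - 2)\,2^{-m}\Bigr].
\]
So $\delta_C(m) = \tfrac{1}{n}\bigl[(n-\widetilde d-2) - (n-\widetilde d-2)2^{-m}\bigr] = \tfrac{n-\widetilde d - 2}{n}\bigl(1 - 2^{-m}\bigr)$.

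From this closed form the result is immediate: $1 - 2^{-m}$ is strictly increasing in $m$, so as long as the prefactor $n - \widetilde d - 2$ is positive, $\delta_C$ is strictly increasing. Hence the key steps are: (i) algebraically simplify $\delta_C(m)$ to the factored form above; (ii) observe $1 - 2^{-m}$ is increasing; (iii) argue the coefficient $n - \widetilde d - 2 > 0$ under the operating regime (this holds because $\widetilde d = \mathrm{Round}(d_0/m)$ and we need enough redundancy $n - \widetilde d$ for the scheme to correct any errors at all — indeed $\widetilde t_m \geq 0$ already forces $(2^m-1)(n-\widetilde d) \geq 2^{m+1}-2$, i.e. $n - \widetilde d \geq 2 + 2/(2^m-1) > 2$). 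One then concludes $\delta_C(m+1) > \delta_C(m)$.

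The only real subtlety — and the step I would be most careful about — is point (iii), namely justifying that the coefficient $n - \widetilde d - 2$ is genuinely positive rather than merely nonnegative, and making clear that this is not a vacuous statement: $\widetilde d$ itself depends on $m$ through $\mathrm{Round}(d_0/m)$, so strictly speaking $\delta_C$ depends on $m$ both through the explicit $2^{-m}$ and through $\widetilde d$. Increasing $m$ decreases $\widetilde d$, which only increases $n - \widetilde d - 2$, so the monotonicity is in fact reinforced, not threatened, by that dependence; I would state this explicitly so the reader sees the comparison $\delta_C(m+1) \geq \tfrac{n - \widetilde d(m) - 2}{n}(1 - 2^{-(m+1)}) > \tfrac{n-\widetilde d(m)-2}{n}(1-2^{-m}) = \delta_C(m)$ goes through. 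With that observation the proof is a two-line computation plus the positivity remark, so I would keep it short and not belabor the arithmetic.
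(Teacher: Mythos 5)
Your approach is essentially the paper's, made rigorous: the paper offers no proof of the corollary beyond asserting that the expression in equation~(\ref{eqn:efficiency_m_layer}) ``is a monotonically increasing function for $m$,'' and your factorization
\[
\delta_C(m)=\frac{n-\widetilde{d}-2}{n}\left(1-2^{-m}\right)
\]
is the cleanest way to substantiate that claim. Your remark that the implicit dependence of $\widetilde{d}=\mathrm{Round}(d_0/m)$ on $m$ only reinforces the monotonicity (larger $m$ gives smaller $\widetilde{d}$, hence a larger prefactor) is a point the paper silently ignores, and is worth keeping. There is, however, an arithmetic slip in exactly the step you flagged as delicate: since $2^{m+1}-2=2(2^m-1)$, the condition $\widetilde{t}_m\ge 0$ (worst case) yields precisely $n-\widetilde{d}\ge 2$, not $n-\widetilde{d}\ge 2+2/(2^m-1)$, so it does not by itself give strict positivity of the coefficient. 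In the boundary case $n-\widetilde{d}=2$ one has $\delta_C\equiv 0$ and the function is constant, not increasing. To get strict monotonicity you should instead assume the non-degenerate regime in which the code corrects at least one error in the worst case; since $n$ and $\widetilde{d}$ are integers this forces $n-\widetilde{d}\ge 3$, hence $n-\widetilde{d}-2>0$, and your two-line comparison $\delta_C(m+1)>\delta_C(m)$ then goes through.
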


\begin{remark}
During the optimization, we set the code rate of the rate-matched MSR code to a constant value and maximize the error correction capability. To optimizing the rate-matched MSR code, we can also set the error correction capability $t_i$ for $i=m$ in~(\ref{eqn:opt_n_ti}) to a constant value 
\begin{equation}
\label{eqn:dual_eqn}
t_m = t_0
\end{equation}
and maximize the code rate. The problem can be written as:
\begin{equation}
\begin{array}{l}
\mbox{Maximize} \:\:\:\:\:\sum_{j=1}^{m}d_j \\
\mbox{subject to} \:\:\:\:\: (\ref{eqn:restri_n})\:\: \mbox{and} \: (\ref{eqn:n_x1x2})\:\:\mbox{for}\:\:2\le i \le m, \:\: (\ref{eqn:dual_eqn}).
\end{array}
\end{equation}
The optimization result is the same as that of~(\ref{eqn:opt_original}): when all the $d_i's$ for $1 \leq i \leq m$ are the same, the code rate is maximized. $d_i$, $1 \le i \le m$, satisfies the following equation:
\begin{equation}
d_i \ge n - \frac{2^mt_0+2^{m+1}-2}{2^m-1} \:\:\mbox{(worst case)}.
\end{equation}
\end{remark}

\subsubsection{Evaluation of the Optimization}
Although at the beginning of this section we propose to decode the code with a lower rate first in the $m$-layer rate-matched MSR code, equation~(\ref{eqn:opt_result_d}) shows that we can get the optimized error correction capability when all the rates of the codes in the $m$-layer code are equal. However, this result is not in conflict with our assumption in equation~(\ref{eqn:m_layer_assump}).

\paragraph{Comparison with the Hermitian code based MSR code in~\cite{jian14}}
The Hermitian code based MSR code (H-MSR code) in~\cite{jian14} has better error correction capability than the universally resilient MSR code. However, because the structure of the underlying Hermitian code is predetermined, the error correction capability might not be optimal. In figure~\ref{fig:comp_h}, the maximum number of malicious nodes from which the errors can be corrected by the H-MSR code is shown. Here we set the parameter $q$ of the Hermitian code~\cite{Hermitian} from 4 to 16 with a step of 2. In the figure, we also plot the performance of the $m$-layer rate-matched MSR code with the same code rates as the H-MSR code. The comparison result demonstrates that the rate-matched MSR code has better error correction capability than the H-MSR code. Moreover, the rate-matched code is easier to understand and has more flexibility than the H-MSR code.

\begin{figure}
\centering
\includegraphics[width=.8\columnwidth]{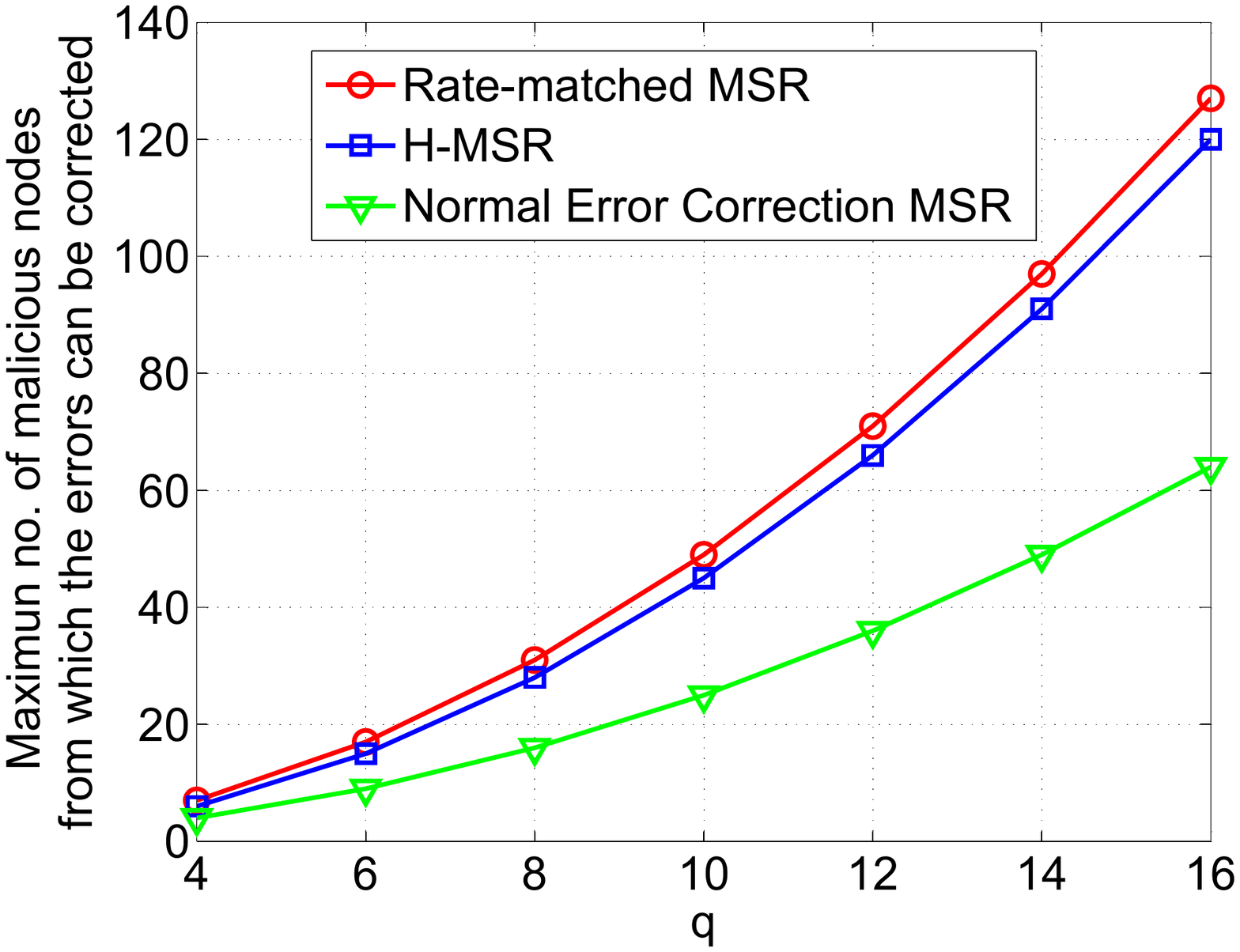}
\caption{Comparison of error correction capability between the $m$-layer rate matched MSR code and the H-MSR code}
\label{fig:comp_h}
\end{figure}

\paragraph{Number of layers and error correction efficiency}

Since we have seen the advantage of the rate-matched MSR code over the universally resilient MSR code in Section~\ref{sec:eva_3}, here we will mainly discuss how the number of layers can affect the error correction efficiency. The error correction efficiency of the $m$-layer rate-matched MSR code is shown is Fig.~\ref{fig:delta_c_n}, where we set $n=30$ and $d_0=50$. We also plot the error correction efficiency ${\delta'}_C$ of the universally resilient MSR code with same code rates for comparison. From the figure we can see that when $n$ and $d_0$ are fixed, the optimal error correction efficiency will increase with the number of layers $m$ as in Corollary~\ref{cor:monoincrease}.
\begin{figure}
\centering
\includegraphics[width=.8\columnwidth]{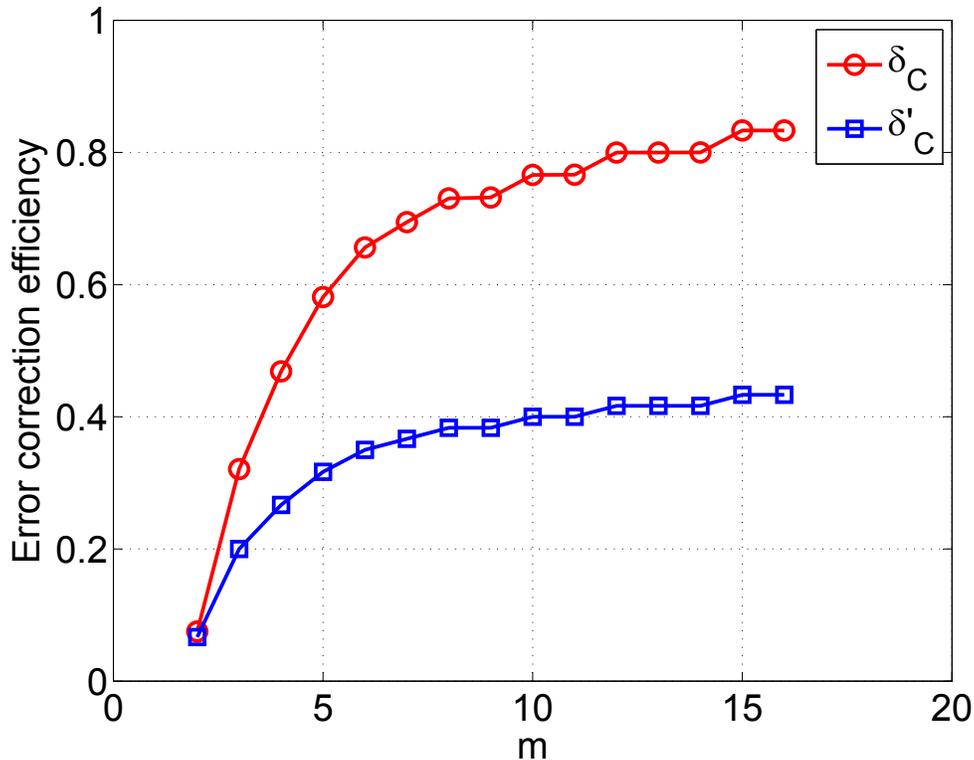}
\caption{The optimal error correction efficiency of the $m$-layer rate-matched MSR code under different m for $2 \le m \le 16$}
\label{fig:delta_c_n}
\end{figure}

\paragraph{Optimized storage capacity}
Moreover, the optimization condition in equation~(\ref{eqn:opt_result_d}) also leads to maximum storage capacity besides the optimal error correction capability. We have the following theorem:
\begin{theorem}
The $m$-layer rate-matched MSR code can achieve the maximum storage capacity if the parameter $d$'s of all the layers are the same, under the constraint in equation~(\ref{eqn:sumd}).
\end{theorem}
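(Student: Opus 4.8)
The plan is to reduce the claim to the linear–programming analysis already carried out for the error–correction optimization. First I would make the notion of ``storage capacity'' explicit: since layer $L_i$ is encoded with the full rate MSR code of parameter $d_i$, it has per–node storage $\alpha_i=d_i/2$ and stores $B_{H,i}=(\alpha_i+1)\alpha_i=\tfrac{d_i}{2}\!\left(\tfrac{d_i}{2}+1\right)$ source symbols per block, so the total amount of source data carried by the $m$-layer code is
\[
B_F=\sum_{i=1}^{m}B_{H,i}=\frac14\sum_{i=1}^{m}d_i^2+\frac12\sum_{i=1}^{m}d_i .
\]
Under the constraint~(\ref{eqn:sumd}) the linear term equals $d_0/2$ and is fixed, so the problem becomes: maximize $\sum_i d_i^2$ subject to $\sum_i d_i=d_0$, the ordering~(\ref{eqn:m_layer_assump}), and the decodability bounds~(\ref{eqn:restri_n}) (equivalently $n-d_i-1\ge t_{i-1}$) that every valid $m$-layer code must satisfy. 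This is an optimization over exactly the polytope that appears in~(\ref{eqn:opt_original}).

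Next I would solve this program. Writing the KKT/Lagrange stationarity condition for an interior point gives $\partial B_F/\partial d_i=\lambda$ for all $i$, i.e. $\tfrac{d_i}{2}+\tfrac12=\lambda$, hence $d_1=d_2=\dots=d_m$, which together with $\sum_i d_i=d_0$ yields $d_i=\mathrm{Round}(d_0/m)=\widetilde d$, exactly the condition~(\ref{eqn:opt_result_d}) that was already shown to optimize the error correction. I would then check that this point is feasible — it meets the ordering constraint with equality and, by the substitution performed in Section~\ref{sec:rate_matching_m}, the decodability constraints — and conclude that the error–correction–optimal configuration and the storage–optimal configuration coincide, which is the assertion of the theorem.

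The step I expect to be the real obstacle is certifying that $d_i=\widetilde d$ is genuinely the \emph{maximizer}: the objective $\sum_i d_i^2$ is convex, so stationarity on the hyperplane $\sum_i d_i=d_0$ does not by itself distinguish a maximum from a minimum, and one has to bring in the remaining polytope constraints rather than argue naively. The clean way to handle this is to pass to the dual formulation of the Remark — fix the error–correction level $t_m=t_0$ and maximize the code rate $\sum_i d_i$ — for which the SIMPLEX computation already shows the optimum is attained precisely at $d_1=\dots=d_m$; since a larger $\sum_i d_i$ with the same $n$ means strictly more stored data per node, the equal‑$d_i$ configuration that maximizes error correction for a fixed rate is exactly the one that maximizes the storable file size for a fixed error–correction requirement. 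I would spell out this equivalence (the two programs share the same feasible polytope, and their optima are read off the same vertex), which removes the convexity ambiguity; the remaining bookkeeping — rounding $d_0/m$ and handling the cases $m\mid d_0$ versus $m\nmid d_0$ — is routine.
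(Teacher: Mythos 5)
Your first two paragraphs are essentially the paper's own proof: the paper also writes the capacity as $B=\sum_{i=1}^{m}(d_i/2)(d_i/2+1)$, forms the Lagrangian $\Lambda_L=\sum_i (d_i/2)(d_i/2+1)+\lambda(\sum_i d_i-d_0)$, sets $\partial\Lambda_L/\partial d_i=(d_i+1)/2-\lambda=0$, and reads off $d_1=\dots=d_m$. The obstacle you flag in your last paragraph is real, and it is exactly what the paper glosses over: on the hyperplane $\sum_i d_i=d_0$ the objective is $B=\tfrac14\sum_i d_i^2+\tfrac{d_0}{2}$, a strictly convex function, so its unique stationary point (the equal split) is its \emph{minimizer} there, not its maximizer; the maximum over the polytope sits at a vertex. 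Concretely, for $m=2$, $d_0=10$: $d_1=d_2=5$ gives $B=17.5$ while $d_1=2$, $d_2=8$ gives $B=22$. Spotting this is to your credit — but it means the claim needs a genuinely different argument, not a patch.

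Unfortunately, your proposed repair via the dual program of the Remark does not supply one. That program maximizes the \emph{linear} functional $\sum_i d_i$ for a fixed error-correction level, whereas the storage capacity is the \emph{quadratic} $\tfrac14\sum_i d_i^2+\tfrac12\sum_i d_i$; two feasible points can have a larger $B$ at a smaller $\sum_i d_i$ (e.g.\ $d_1=d_2=10$ gives $\sum_i d_i=20$ and $B=60$, while $d_1=2$, $d_2=16$ gives $\sum_i d_i=18$ and $B=74$). Your bridging sentence ``a larger $\sum_i d_i$ means strictly more stored data'' is true of the per-node storage $\sum_i\alpha_i=\tfrac12\sum_i d_i$ — which is in any case pinned to $d_0/2$ by (\ref{eqn:sumd}) — but not of the file size $B$ that the theorem is about, so the LP vertex tells you nothing about where $B$ is maximized. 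As it stands, neither the Lagrange computation (yours or the paper's) nor the LP detour establishes the stated conclusion; a correct treatment would have to either revise the claim (the equal-$d_i$ point minimizes $B$ subject to (\ref{eqn:sumd})) or invoke additional constraints beyond (\ref{eqn:sumd}) that actually force the maximum of $B$ onto the equal-split point.
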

\begin{proof}
The code of the $i^{th}$ layer can store one block of data with the size $B_i = \alpha_i(\alpha_i +1) =(d_i/2)(d_i/2 + 1)  $. So the $m$-layer code can store data with the size $B=\sum_{i=1}^{m}(d_i/2)(d_i/2 + 1) $. Our goal here is to maximize $B$ under the constraint in equation~(\ref{eqn:sumd}). 

We can use Lagrange multipliers to find the point of maximum $B$. Let
\begin{equation}\label{eqn:largrange_m}
\Lambda_L(d_1,\dots,d_m,\lambda) = \sum_{i=1}^{m}(d_i/2)(d_i/2 + 1) + \lambda(\sum_{i=1}^{m}d_i - d_0).
\end{equation}
We can find the maximum value of $B$ by setting the partial derivatives of this equation to zero:
\begin{equation}
\frac{\partial \Lambda_L}{\partial d_i} = \frac{d_i+1}{2} - \lambda = 0, \:\: \forall 1\le i \le m.
\end{equation}
Here we can see that when all the parameter $d$'s of all the layers are the same, we can get the maximum storage capacity $B$. This maximization condition coincides with the optimization condition for achieving the goal of this section: optimizing the overall error correction capability of the rate-matched MSR code.
\end{proof}

\subsection{Practical Consideration of the Optimization}
So far, we implicitly presume that there is only one data block of the size $B_i = \alpha_i(\alpha_i +1)$ for each layer $i$. In practical distributed storage, it is the parameter $d_i$ that is fixed instead of $d_0$, the summation of $d_i$. However, as long as we use $m$ layers of MSR codes with the same parameter $d=\widetilde{d}$, we will still get the optimal solution for $d_0 = m\widetilde{d}$. In fact, the $m$-layer rate-matched MSR code here becomes a single full rate MSR code with parameter $d=\widetilde{d}$ and $m$ data blocks. And based on the dependent decoding idea we describe at the beginning of Section~\ref{sec:m_layer_msr}, we can achieve the optimal performance.

So when the file size $B_F$ is larger than one data block size $\widetilde{B}$ of the single full rate MSR code with parameter $d=\widetilde{d}$, we will divide the file into $\left. \lceil B_F/\widetilde{B} \rceil  \right.$ data blocks and encode them separately. If we decode these data blocks dependently, we can get the optimal error correction efficiency.

\subsubsection{Evaluation of the Optimal Error Correction Efficiency}
In the practical case, $\widetilde{d}$ in equation~(\ref{eqn:efficiency_m_layer}) is fixed. So here we will study the relationship between the number of dependently decoding data blocks $m$ and the error correction efficiency $\delta_C$, which is shown in Fig.~\ref{fig:delta_c_practical}. We set $n=30$ and $\widetilde{d}=5,10$. From the figure we can see that although $\delta_C$ will become higher with the increasing of dependently decoding data blocks $m$, the efficiency improvement will be negligible for $m\ge8$.  Actually when $m=7$ the efficiency has already become $99\%$ of the upper bound of $\delta_C$.

\begin{figure}
\centering
\includegraphics[width=.8\columnwidth]{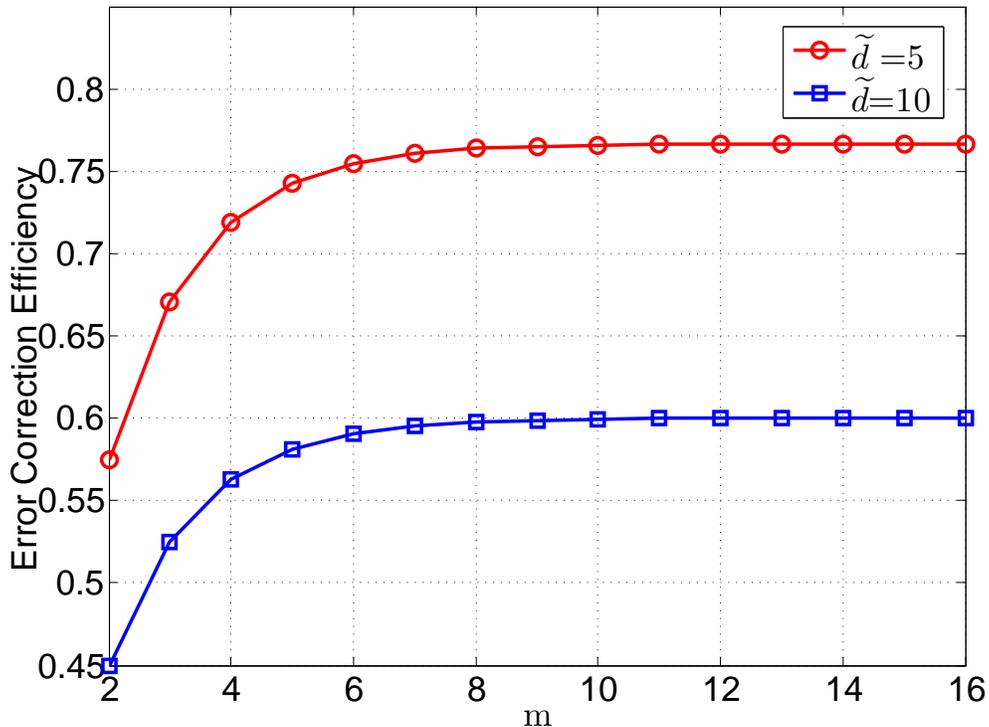}
\caption{The optimal error correction efficiency for $2 \le m \le 16$}
\label{fig:delta_c_practical}
\end{figure}

On the other hand, there exist parallel algorithms for fast MDS code decoding~\cite{Dabiri95}. We can decode blocks of MDS codewords parallel in a pipeline fashion to accelerate the overall decoding speed. The more blocks of codewords we decode parallel, the faster we will finish the whole decoding process. For large files that could be divided into a large amount of data blocks ($\theta$ blocks), we can get a trade-off between the optimal error correction efficiency and the decoding speed by setting the number of dependently decoding data blocks $m$ and the number of parallel decoding data blocks $\rho$ under the constraint $\theta = m\rho$.

\subsection{Encoding}
From the analysis above we know that to encode a file with size $B_F$ using the optimal $m$-layer rate-matched MSR code is to encode the file using a full rate MSR code with predetermined parameter $d=2\alpha=\widetilde{d}$. First the file will be divided into $\theta$ blocks of data with size $\widetilde{B}$, where $\theta= \left. \lceil B_F / \widetilde{B} \rceil \right.$. Then the $\theta$ blocks of data will be encoded into code matrices $\CH_1,\dots,\CH_{\theta}$ and form the final $n \times \alpha\theta$ codeword matrix: $CM = [\CH_1,\dots,\CH_{\theta}]$. Each row $\mathbf{c}_i = [\ch_{1,i}, \dots, \ch_{\theta,i}]$, $0 \leq i \leq n-1$, of the codeword matrix $CM$ will be stored in storage node $i$, where $\ch_{j,i}$ is the $i^{th}$ row of $\CH_j$, $1 \leq j \leq \theta$. The encoding vector $\boldsymbol{\psi}_i$ for storage node $i$ is the $i^{th}$ row of $\Psi$ in equation~(\ref{eq:encoding_msr_h}).
\begin{theorem}
The encoding of m-layer rate-matched MSR code can achieve the MSR point in equation~(\ref{eq:MSR_tradeoff}) since both the full rate code and the fractional code are MSR codes.
\end{theorem}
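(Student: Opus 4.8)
Here is a proof plan for the final theorem (in the practical form of the encoding described just above it, i.e. $\theta=\lceil B_F/\widetilde B\rceil$ blocks of a single full rate MSR code with parameter $d=\widetilde d=2\alpha$).

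\medskip
\noindent\textbf{Approach.} The plan is to verify directly that the triple $(\alpha_{\mathrm{tot}},\gamma_{\mathrm{tot}},B_F)$ attached to the $m$-layer encoding lies on the MSR curve~(\ref{eq:MSR_tradeoff}), exploiting two facts: (i) a single full rate product-matrix block already sits on that curve, and (ii) the right-hand sides of~(\ref{eq:MSR_tradeoff}) are homogeneous of degree one in the file size $B$, so the whole curve is closed under scaling $(\alpha,\gamma,B)\mapsto(\theta\alpha,\theta\gamma,\theta B)$. Since $\mathsf{CM}=[\CH_1,\dots,\CH_\theta]$ uses one common encoding matrix $\Psi$ (hence the same $\boldsymbol\psi_i$ at every node) across all blocks, all three quantities of the composite code are exactly $\theta$ times their single-block values, and the claim follows.

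\medskip
\noindent\textbf{Steps, in order.} First I would record the single-block parameters: with $\beta=1$ and $d=\widetilde d=2\alpha$ we have $k=\alpha+1$ and $\widetilde d=2k-2$, one block stores $\widetilde B=\alpha(\alpha+1)$ symbols, and since $\widetilde B/k=\alpha$ and $\widetilde B\widetilde d/\bigl(k(\widetilde d-k+1)\bigr)=2\widetilde B/k=\widetilde d$, a single block meets~(\ref{eq:MSR_tradeoff}); this is exactly the corresponding proposition for the full rate code together with the product-matrix construction of~\cite{Rashmi}. Second, I would read off the composite parameters from the block-diagonal structure: node $i$ stores $\alpha_{\mathrm{tot}}=\alpha\theta$ symbols; a DC contacting $k$ nodes downloads $\alpha\theta$ symbols from each, for a total of $k\alpha\theta=\theta\widetilde B=B_F$; and a replacement node runs the single-block regeneration $\theta$ times, downloading $\theta$ symbols from each of $\widetilde d$ helpers, so $\beta_{\mathrm{tot}}=\theta$ and $\gamma_{\mathrm{tot}}=\widetilde d\,\theta$. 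Third, substituting $B_F=\theta\widetilde B$ into~(\ref{eq:MSR_tradeoff}) gives $\alpha_{MSR}=B_F/k=\theta\alpha=\alpha_{\mathrm{tot}}$ and $\gamma_{MSR}=B_F\widetilde d/\bigl(k(\widetilde d-k+1)\bigr)=\widetilde d\,\theta=\gamma_{\mathrm{tot}}$, which places the code on the MSR point. (Equivalently, one invokes the degree-one homogeneity of~(\ref{eq:MSR_tradeoff}) and the fact that $B\le\sum_{i=0}^{k-1}\min\{\alpha,(d-i)\beta\}$ in~(\ref{eq:min_cut}) is met with equality block by block.)

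\medskip
\noindent\textbf{Expected obstacle.} No deep idea is needed; the only points requiring care are bookkeeping. One must argue that block-wise processing is lossless: because every $\CH_j$ shares $\Phi,\Lambda,\Psi$, the juxtaposed code is literally one product-matrix MSR code carrying $\theta$ message pairs $(S_1^{(j)},S_2^{(j)})$, so the ``any $d$ rows of $\Psi$ linearly independent'' hypothesis and the regeneration/reconstruction guarantees of~\cite{Rashmi} transfer verbatim and no redundancy beyond the per-block minimum is introduced. The genuinely delicate spot is the ceiling in $\theta=\lceil B_F/\widetilde B\rceil$: when $\widetilde B\nmid B_F$ the last block is zero-padded (equivalently a degenerate fractional rate block), so the identity $B_F=\theta\widetilde B$, and hence exact optimality, holds for files whose size is a multiple of $\widetilde B$ and otherwise only up to the usual padding overhead, which is negligible under the large-file assumption made throughout the paper.
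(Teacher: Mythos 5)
Your proposal is correct and takes the same route the paper implicitly relies on: the paper states this theorem without any proof, justifying it only by the clause ``since both the full rate code and the fractional code are MSR codes,'' and your explicit parameter check --- a single full-rate block with $d=\widetilde d=2k-2$, $\beta=1$ satisfies $\widetilde B/k=\alpha$ and $2\widetilde B/k=\widetilde d$, and concatenating $\theta$ blocks over the common $\Psi$ scales $(\alpha,\gamma,B)$ to $(\theta\alpha,\theta\widetilde d,\theta\widetilde B)$, which still lies on the MSR point by degree-one homogeneity --- is exactly the verification the paper omits. Your caveat about the ceiling in $\theta=\lceil B_F/\widetilde B\rceil$ (exact optimality only up to padding when $\widetilde B\nmid B_F$) is a fair observation that the paper glosses over under its large-file assumption, and your reading of the theorem as concerning $\theta$ blocks of a single full rate code is the right one, since the reference to a ``fractional code'' in the statement is a leftover from the 2-layer case.
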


\subsection{Regeneration}
Suppose node $z$ fails, the replacement node $z'$ will send regeneration requests to the rest of $n-1$ helper nodes. Upon receiving the regeneration request, helper node $i$ will calculate and send out the help symbol $p_i = {\bf{c}}_i \boldsymbol{\phi}_z^T$. 

\begin{figure}
\centering
\includegraphics[width=.8\columnwidth]{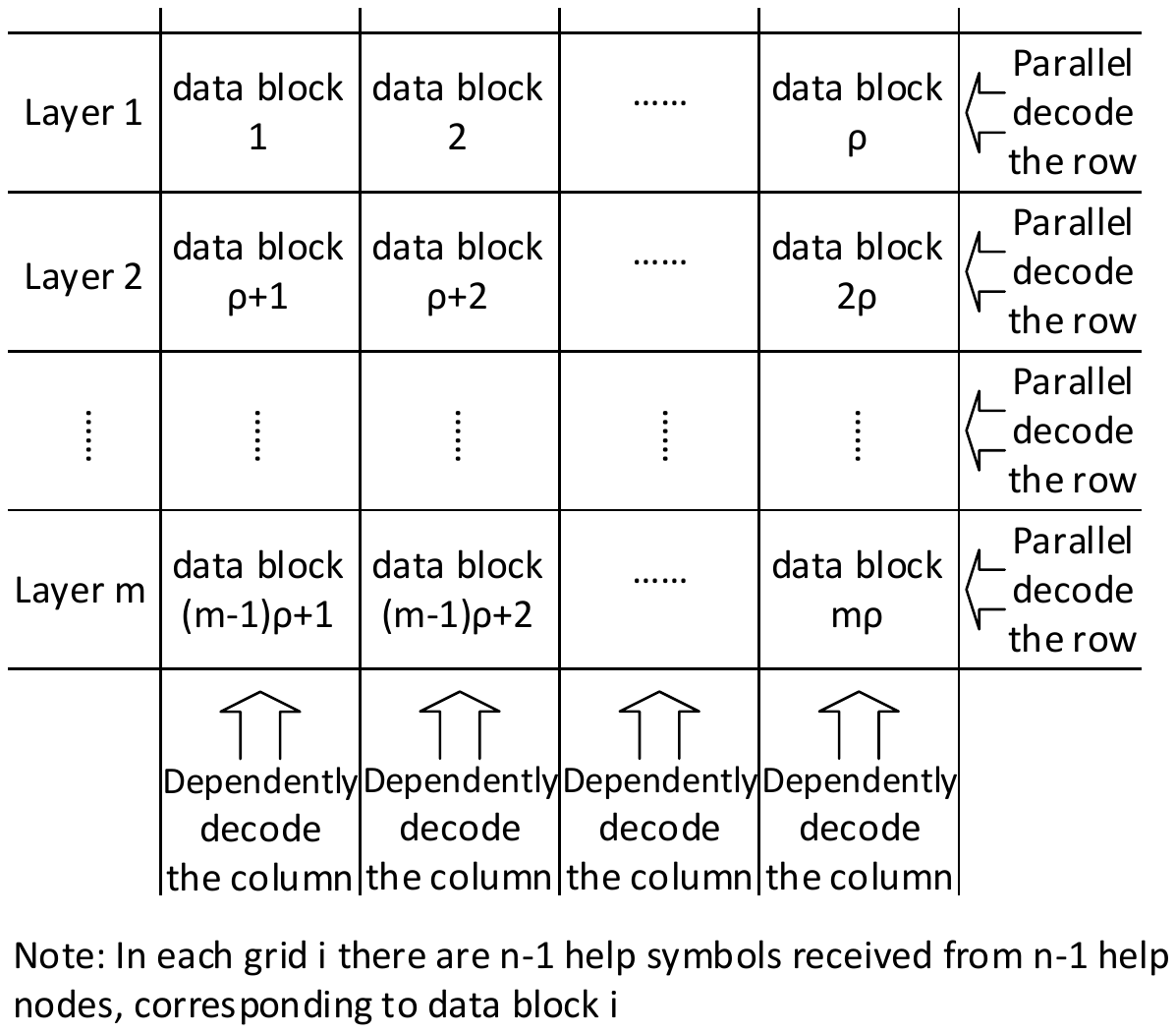}
\caption{Lattice of received help symbols for regeneration}
\label{fig:lattice_arrange}
\end{figure}

As we discuss above, combining both dependent decoding and parallel decoding can achieve the trade-off between optimal error correction efficiency and decoding speed. Although all $\theta$ blocks of data are encoded with the same MSR code, $z'$ will place the received help symbols into a 2-dimension lattice with size $m \times \rho$ as shown in Fig.~\ref{fig:lattice_arrange}. In each grid of the lattice there are $n-1$ help symbols corresponding to one data block, received from $n-1$ helper nodes. We can view each row of the lattice as related to a layer of an $m$-layer rate-matched MSR code with $\rho$ blocks of data, which will be decoded parallel. We also view each column of the lattice as related to $m$ layers of an $m$-layer rate-matched MSR code with one block of data each layer, which will be decoded dependently. $z'$ will perform Algorithm~\ref{alg:reg_m_layer} to regenerate the contents of the failed node $z$. 

Arrange the received help symbols according to Fig.~\ref{fig:lattice_arrange}. Repeat the following steps from Layer $1$ to Layer $m$:
\begin{algorithms}$z'$ regenerates symbols of the failed node $z$ for the $m$-layer rate-matched MSR code
\label{alg:reg_m_layer}

\normalfont

\begin{namelist}{\textbf{Step n:}}
\item [\step{1}] For a certain grid, if errors are detected in the symbols sent by node $i$ in previous layers of the same column, replace the symbol sent from node $i$ by an erasure $\bigotimes$.

\item [\step{2}] Parallel regenerate all the symbols related to $\rho$ data blocks using the algorithm similar to Algorithm~\ref{alg:reg_h} with only one difference: parallel decode all the $\rho$ MDS codes in {\bf{Step 1}} of Algorithm~\ref{alg:reg_h}. 
\end{namelist}
\end{algorithms}

The error correction capability of the regeneration is described in Theorem~\ref{thm:m-layer-reg}.

\subsection{Reconstruction}
When DC needs to reconstruct the original file, it will send reconstruction requests to $n$ storage nodes. Upon receiving the request, node $i$ will send out the symbol vector $\mathbf{c}_i$. Suppose $\mathbf{c}'_i= \mathbf{c}_i + \mathbf{e}_i$ is the response from the $i^{th}$ storage node. If $\mathbf{c}_i$ has been modified by the malicious node $i$, we have $\mathbf{e}_i \in \mathbb{F}_q^{\alpha\theta} \backslash \{\mathbf{0}\}$. The strategy of combining dependent decoding and parallel decoding for reconstruction is similar to that for regeneration. $DC$ will place  the received symbols into a 2-dimension lattice with size $m \times \rho$. The only difference is that in a grid of the lattice there are $n$ symbol vectors $\ch'_{j,0}, \dots, \ch'_{j,n-1}$ corresponding to data block $j$, received from $n$ storage nodes. DC will perform the reconstruction using Algorithm~\ref{alg:rec_rate_matched_m_rec}.

Arrange the received symbols similar to Fig.~\ref{fig:lattice_arrange}. Here we place received codeword matrix $\CH'_j$ into grid $j$ instead of help symbols received from n-1 help nodes. Repeat the following steps from Layer $1$ to Layer $m$:

\begin{algorithms}DC reconstructs the original file for the $m$-layer rate-matched MSR code
\label{alg:rec_rate_matched_m_rec}

\normalfont

\begin{namelist}{\textbf{Step n:}}
\item [\step{1}] For a certain grid, if errors are detected in the symbols sent by node $i$ in previous layers of the same column, replace symbols sent from node $i$ by erasures $\bigotimes$.

\item [\step{2}] Parallel reconstruct all the symbols of the $\rho$ data blocks using the algorithm similar to Section~\ref{Sec:Rec_high_rate} with only one difference: parallel decode all the MDS codes in Section~\ref{Sec:Rec_high_rate}. 
\end{namelist}
\end{algorithms}

%\subsubsection{Optimized Parameters}

For data reconstruction, we have the following theorem:
\begin{theorem}[Optimized Parameters]
For the reconstruction of $m$-layer rate-matched MSR code, when 
\begin{equation}\label{eqn:opt_result_d}
d_i= \rm{Round}(d_0/m)=\widetilde{d}\:\:\:\: \mbox{for}\:\: 1 \le i \le m,
\end{equation}
%{it can correct errors from at most
%\begin{eqnarray}
%\widetilde{t}_m &=& ((2^{m}-1)(n-\widetilde{d}) -  \sum_{j=1}^m 2^{j-1}\varepsilon_j - 2^m + 1)/2^m\nonumber \\
%&\ge&((2^m\!-\!1)(n\!-\!\widetilde{d})\! -\! 2^{m+1}\! +\! 2)/2^m \:\:\mbox{(worst case)}.
%\end{eqnarray}
%malicious nodes.}
the number of malicious nodes from which the errors can be corrected is maximized.
\end{theorem}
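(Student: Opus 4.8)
The plan is to mirror the argument already used for the regeneration case (Theorem~\ref{thm:m-layer-reg}), with the role of the MDS dimension $d_i$ taken over by the reconstruction dimension $\alpha_i = d_i/2$. First I would recall from Section~\ref{Sec:Rec_high_rate} that, for the full rate code of layer $L_i$, reconstruction reduces to decoding each column of $C=\Phi S_1'\Phi^T$ (and likewise $D=\Phi S_2'\Phi^T$) as an $(n-1,\alpha_i,n-\alpha_i)$ MDS code, the length being $n-1$ because the diagonal of $C$ is unknown, and a malicious node $i$ corrupts precisely one coordinate (row/column $i$) of every such column. Hence, with $e$ coordinates already flagged as erasures, layer $L_i$ can correct $\bigl\lfloor (n-\alpha_i-1-e)/2\bigr\rfloor$ additional erroneous nodes — term for term the same combinatorial quantity that drives the regeneration bound, with $d_i$ replaced by $\alpha_i$.

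Next I would set up the layer-by-layer recursion for reconstruction. Layer $L_1$ locates and corrects $u_1=\bigl\lfloor (n-\alpha_1-1)/2\bigr\rfloor$ malicious nodes; by the dependent-decoding strategy of Algorithm~\ref{alg:rec_rate_matched_m_rec}, the symbols of those nodes become erasures in the later layers, so for $2\le i\le m$,
\[
u_i=\left\lfloor\frac{n-\alpha_i-1-u_{i-1}}{2}\right\rfloor+u_{i-1},
\]
subject to the feasibility constraint $n-\alpha_i-1\ge u_{i-1}$ and to the monotonicity $\alpha_{i-1}\le\alpha_i$ inherited from~(\ref{eqn:m_layer_assump}). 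Telescoping exactly as in~(\ref{eqn:opt_n_ti}) gives
\[
u_m=\frac{1}{2^m}\left(\sum_{j=1}^m 2^{j-1}(n-\alpha_j)-\sum_{j=1}^m 2^{j-1}\varepsilon_j-2^m+1\right),
\]
with each $\varepsilon_j\in\{0,1\}$ absorbing a floor correction, while the feasibility constraints take the same affine form as~(\ref{eqn:restri_n}).

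Then I would phrase ``maximize the number of correctable malicious nodes in reconstruction'' as the linear program of maximizing $u_m$ over $\alpha_1,\dots,\alpha_m$ subject to these affine constraints together with $\sum_{i=1}^m\alpha_i=d_0/2$ (the image of~(\ref{eqn:sumd}) under $\alpha_i=d_i/2$). Since this is, coordinate for coordinate, the same LP as~(\ref{eqn:opt_original}) after the substitution $d_i\mapsto\alpha_i=d_i/2$, $d_0\mapsto d_0/2$, the same SIMPLEX analysis applies and the optimum is attained at the equal-rate vertex $\alpha_1=\cdots=\alpha_m=\mathrm{Round}\bigl(d_0/(2m)\bigr)$, equivalently $d_1=\cdots=d_m=\mathrm{Round}(d_0/m)=\widetilde d$. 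Substituting back yields the maximal value $\widetilde u_m=\bigl((2^m-1)(n-\widetilde d/2)-\sum_{j=1}^m 2^{j-1}\varepsilon_j-2^m+1\bigr)/2^m$, which establishes the claim.

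The main obstacle, as in the regeneration case, is not the linear program but verifying that the reduction is faithful: one must confirm that each layer's reconstruction really does collapse to a single $(n-1,\alpha_i,n-\alpha_i)$ MDS decoding problem in which a located malicious node consumes exactly one erased coordinate uniformly across all columns (so the erasure count propagates as $e=u_{i-1}$, not more), and that the floor-correction terms $\varepsilon_j$ only perturb lower-order terms and do not shift the optimal vertex — both of which follow the reasoning already used for Theorem~\ref{thm:m-layer-reg}. A minor additional check is that the optimal solution satisfies~(\ref{eqn:m_layer_assump}) with equality, exactly as noted there, so the optimum is admissible for the $m$-layer construction.
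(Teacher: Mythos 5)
Your proposal is correct and follows essentially the same route as the paper: the paper's proof simply observes that the regeneration optimization rests on decoding $(n-1,d_i,n-d_i)$ MDS codes while reconstruction rests on $(n-1,\alpha_i,n-\alpha_i)$ MDS codes with $\alpha_i=d_i/2$, so equal $d_i$'s give equal $\alpha_i$'s and the same LP conclusion carries over. You have merely written out explicitly (the recursion in $\alpha_i$, the telescoped objective, and the substituted LP) what the paper states in one sentence.
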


\begin{proof}
From Section~\ref{sec:rate_matching_m} we know that for regeneration of an optimal $m$-layer rate-matched MSR code, the parameter $d$'s of all the layers are the same, which implies the parameter $\alpha$'s of all layers are also the same. Since the optimization of regeneration is derived based on the decoding of $(n-1,d,n-d)$ MDS codes and in reconstruction we have to decode $(n-1,\alpha,n-\alpha)$ MDS codes, if the parameter $\alpha$'s of all the layers are the same, we can achieve the same optimization results for reconstruction.
\end{proof}

%From the analysis above we can see that the same optimized parameters, which are obtained for the data regeneration, can maintain the optimized trade-off between the malicious node detection and storage efficiency for the data reconstruction. 

\section{Conclusion} \label{Sec:Conclusion}
In this paper, we develop two rate-matched regenerating codes for malicious nodes detection and correction in hostile networks: 2-layer rate-matched regenerating code and $m$-layer rate-matched regenerating code. We propose the encoding, regeneration and reconstruction algorithms for both codes. For the 2-layer rate-matched code, we optimize the parameters for the data regeneration, considering the trade-off between the malicious nodes detection probability and the storage efficiency. Theoretical analysis shows that the code can successfully detect and correct malicious nodes using the optimized parameters. Our analysis also shows that the code has higher storage efficiency compared to the universally resilient regenerating code ($70\%$ higher for the detection probability $0.999999$). Then we extend the 2-layer code to $m$-layer code and optimize the overall error correction efficiency by matching the code rate of each layer's regenerating code. Theoretical analysis shows that the optimized parameter could also achieve the maximum storage capacity under the same constraint. Furthermore, analysis shows that compared to the universally resilient regenerating code, our code can improve the error correction efficiency more than $50\%$.

\end{document}